\documentclass[journal,10pt]{IEEEtran}

\usepackage[T1]{fontenc}
\usepackage{amssymb}
\usepackage[utf8]{inputenc}
\usepackage[english]{babel}
\usepackage{amsmath}				
\usepackage{comment}				
\usepackage[T1]{fontenc}
\usepackage{mathtools}
\usepackage{algcompatible,amsmath}
\usepackage[ruled,vlined]{algorithm2e}
\usepackage{graphicx}
\usepackage{multicol}
\usepackage{caption}
\usepackage{afterpage}
\usepackage[noend]{algpseudocode}
\usepackage{color}
\usepackage{etoolbox}
\usepackage{url}
\usepackage{subcaption,siunitx,booktabs}
\usepackage{siunitx,booktabs}
\usepackage{placeins}
\usepackage{bbm}
\usepackage{soul}
\usepackage[noadjust]{cite}

\newcommand\Bigger[2][7]{\left#2\rule{0mm}{#1truemm}\right.}

\normalsize

\makeatletter
\patchcmd{\@makecaption}
{\scshape}
{}
{} 
{}
\makeatother
\usepackage{subcaption}
\usepackage{graphicx}
\usepackage{amsthm}

\newtheorem{proposition}{Proposition}
\newtheorem{theorem}{Theorem}
\newtheorem*{proof*}{Proof}

\newtheorem{remark}{Remark}				

\newcommand{\norm}[1]{\left\lVert#1\right\rVert}

\IEEEoverridecommandlockouts
\makeatletter
\let\origIEEEPARstart\IEEEPARstart
\renewcommand{\IEEEPARstart}[3][1.1]{%
	\def\@IEEEPARstartDROPDEPTH{#1\baselineskip}%
	\origIEEEPARstart{#2}{#3}%
}
\makeatother

\ifCLASSINFOpdf

\else

\fi

\hyphenation{op-tical net-works semi-conduc-tor}

\begin{document}

\title{Resource Reservation in Backhaul and Radio Access Network with Uncertain User Demands}

\author{Navid~Reyhanian,
	Hamid~Farmanbar, and Zhi-Quan Luo,~\IEEEmembership{Fellow, IEEE}
	\thanks{N. Reyhanian is with the Department
		of Electrical and Computer Engineering, University of Minnesota, Minneapolis,
		MN, 55455 USA (e-mail: navid@umn.edu).}
	\thanks{H. Farmanbar is with Huawei Canada Research Center, Ottawa, Canada (e-mail: hamid.farmanbar@huawei.com).}
	\thanks{Z.-Q. Luo is with Shenzhen Research Institute of Big Data, The Chinese University of Hong Kong, Shenzhen, China (e-mail: luozq@cuhk.edu.cn).}
	\thanks{This paper was presented in part at the $21^{\text{st}}$ IEEE International Workshop on Signal Processing Advances in Wireless Communications (SPAWC), Atlanta, GA, USA, May 26--29, 2020 \cite{Reyhanian1}.}}


\maketitle

\begin{abstract}
Resource reservation is an essential step to enable wireless data networks to support a wide range of user demands. In this paper, we consider the problem of joint resource reservation in the backhaul and Radio Access Network (RAN) based on the statistics of user demands and channel states, and also network availability. The goal is to maximize the sum of expected traffic flow rates, subject to link and access point budget constraints, while minimizing the expected outage of downlinks. The formulated problem turns out to be non-convex and difficult to solve to global optimality. We propose an efficient Block Coordinate Descent (BCD) algorithm to approximately solve the problem. The proposed BCD algorithm optimizes the link capacity reservation in the backhaul using a novel multi-path routing algorithm that decomposes the problem down to link-level and parallelizes the computation across backhaul links, while the reservation of transmission resources in RAN is carried out via a novel scalable and distributed algorithm based on Block Successive Upper-bound Minimization (BSUM). We prove that the proposed BCD algorithm converges to a Karush–Kuhn–Tucker (KKT) solution. Simulation results verify the efficiency and the efficacy of our BCD approach against two heuristic algorithms.
\end{abstract}

\begin{IEEEkeywords}
Resource reservation, multi-path routing, traffic maximization, outage minimization, parallel computation.
\end{IEEEkeywords}

\IEEEpeerreviewmaketitle

\section{Introduction}\label{sec:intro}
Resource reservation is an important step in network planning and management due to its significant effects on the user quality of service.
For wireless data networks operating in random and dynamic environments, finding resource reservation protocols that remain robust under uncertain user demands is challenging.
Resource reservation, which balances network performance and its hardware costs, involves traffic forecasting and resource allocation for the predicted traffic \cite{albasheir2015enhanced,kaur2015novel,van2001time}. 
Resource reservation in the backhaul and Radio Access Network (RAN) should satisfy a wide range of applicable traffic demands. In particular, both the link capacity in the backhaul and transmission resources in RAN should be sliced and reserved for users such that upon the arrival of a new demand, the network is able to support it.

Resource reservation for the uncertain demand was first studied by Gomory and Hu in \cite{gomory1962application}, which reserved link capacities using a single commodity routing problem with a finite number of sources.  For communication networks, where both link budget and node budget are to be reserved, different approaches are proposed for resource reservation. In traffic oblivious approaches, to make reservations and slice the network resources, user demand and its statistics are not considered in the problem formulation \cite{5504139,kumar2016kulfi,cicconetti2007end}. The drawback of traffic oblivious approaches is  that they limit the ability of a network to adapt to any given demand. To reserve link capacities in flow networks, a collection of predicted demand scenarios are considered in \cite{applegate2006making,moehle2017distributed}. The proposed algorithms in \cite{applegate2006making,moehle2017distributed} reserve link capacities such that the predicted demand scenarios are supported as much as possible. The accuracy of the reservations in \cite{applegate2006making,moehle2017distributed} is based on the number of predicted scenarios. However, as the number of scenarios increases, the complexity of solving the problem increases. Short term user demands are predicted by Long Short-Term Memory (LSTM) neural networks in \cite{8528330,yan2019intelligent,9109582}. Recurring resource reservations based on the short-term traffic variations incur reconfiguration costs, service interruptions, and overhead in networks \cite{khan2020network}.
The mean of user demands is used in \cite{prados2019complete} to balance the workload among a
set of data centers in a network that consists of the backhaul and RAN such that the utilization of resources is maximized. The joint reservation of computational and radio resources is studied in \cite{8265188}, where different ranges are considered for uncertain user demands. A linear program is formulated in \cite{8265188} to support the uncertain user demands, which vary in given ranges, as much as the network allows. In \cite{9096559}, the transmission resource reservation in RAN is considered where the minimum requirements of users are known and deterministic. The authors of \cite{9096559} proposed a matching-based algorithm to solve an optimization problem with the goal of minimizing the consumption of network resources while meeting the requirements of users.

Optimal routing is studied widely for many settings, e.g., \cite{bertsekas1991linear,tsitsiklis1986distributed,bertsekas1998network}, while optimal resource allocation in RAN has also been studied for different wireless channels, e.g., \cite{yan2019intelligent,9057480,li2001capacity,liao2019model,sciancalepore2019rl,liang2005gaussian,kim2011optimal,9096559}.
The joint routing in the backhaul network and resource allocation RAN is studied in a number of more recent papers \cite{liao2014base,xiao2004simultaneous,el2014joint,wang2018joint,9107209,liu2019optimizing,kordbacheh2019robust,karakayali2007joint}. In \cite{liao2014base} and \cite{9107209}, the user demand requirements are deterministic and known. On the other hand, in \cite{xiao2004simultaneous,wang2018joint,karakayali2007joint,liu2019optimizing,kordbacheh2019robust}, the traffic of users is maximized as much as the network is able to support, regardless of user demand statistics. To find a robust resource reservation, network resources should be reserved based on demand statistics.
In \cite{xiao2004simultaneous,el2014joint,wang2018joint} and \cite{karakayali2007joint,liu2019optimizing,kordbacheh2019robust}, the wireless channel capacity is a deterministic function of input power. Moreover, the convexity of the problem is assumed in \cite{xiao2004simultaneous,el2014joint,wang2018joint,kordbacheh2019robust}. Neither of these assumptions holds in practice, where the wireless channel capacity is random and its distribution is a function of supplied transmission resources \cite{4411539,1210731,4626300}.  

In addition to different proposed formulations for resource allocations and network planning with certain and uncertain user demands in existing literature, several algorithms have been used to solve the resulting optimization problems. Among them, the Alternating Direction Method of Multipliers (ADMM) has been used widely \cite{moehle2017distributed,liao2018distributed,liao2014base,nguyen2017parallel,wu2019toward}. ADMM enables flow decoupling in the network optimization process.
The efficiency of ADMM depends on the number of auxiliary link variables introduced to make the optimization subproblems separable. For networks with a large number of links, ADMM can be slow, i.e., requiring a large number of iterations. 
A dual decomposition method for path-based routing is used in \cite{1664999}, where a gradient ascent approach has been proposed to solve the dual problem. Since in most problems the dual function is non-smooth, the gradient ascent approach has to take small steps, resulting in slow convergence. A distributed approach for large-scale revenue management problems in airline networks is proposed by Kemmer \textit{et al.} in \cite{kemmer2012dynamic}. The single-path dynamic programming approach in \cite{kemmer2012dynamic} has shown great success in practice despite the absence of convergence or solution enhancement guarantees.

In this paper, we propose a  resource management scheme for end-to-end resource reservation, i.e., from data centers to users, based on user demand and downlink achievable rate statistics for a data network consisting of the backhaul and RAN. We consider a multi-path routing in our formulation, where a user can be served by several Access Points (APs) through multiple paths from a data center. We formulate the problem of jointly reserving the transmission resources in RAN and link capacities in the backhaul based on user demand and downlink achievable rate statistics so as to maximize the total expected supportable user traffic, while minimizing the expected outage of downlinks. Since the formulated problem is non-convex and hard to solve,  we propose an efficient Block Coordinate Descent (BCD) algorithm, which is convergent to a Karush–Kuhn–Tucker (KKT) solution of the resource reservation problem. 

In the proposed BCD approach, one block of variables determines the link capacity reservation in the backhaul and the other block of variables specifies the transmission resource reservation in RAN. We alternately optimize the two blocks of variables in the  BCD algorithm. Fixing the transmission resources in RAN, we update the link capacity reservation in the backhaul  via a novel multi-path routing algorithm. Inspired by  the resource level decomposition ideas in \cite{kemmer2012dynamic}, the proposed multi-path routing  decomposes the problem down to link-level and parallelizes the computation across backhaul links. 
	Based on the convergence theory for Block Successive Upper-bound Minimization (BSUM) methods in \cite{razaviyayn2013unified}, we prove that the proposed multi-path routing is convergent to the global minima of an arbitrary convex cost function with Lipschitz continuous gradient. The required computation time for each iteration of the proposed multi-path routing is equal to that for one link regardless of the network size. After updating the link capacity reservations,  we update the transmission resource reservation in RAN.
		 Since the resource reservation problem in RAN is possibly non-convex, we propose a distributed algorithm based on 
			the BSUM techniques to iteratively solve a sequence of convex approximations of the original problem. We prove that the proposed BCD algorithm converges to a KKT solution.  To verify the performance of the proposed algorithm, two heuristic algorithms are also developed and used as benchmarks to evaluate the efficiency and the efficacy of the proposed approach via simulations.

The rest of this paper is organized as follows. The system model and problem formulation are given in Section II. Section III describes a general scalable and distributed algorithm for the multi-path flow routing. In Section IV, we propose a BCD algorithm for the network resource reservation problem. The simulation results are given in Section V, and concluding remarks are given in Section VI.

\section{System Model and Problem Formulation}\label{sec:model}

Consider a typical scenario whereby user data is transmitted via backhaul network links from data centers to APs in RAN, which in turn relay the data to the desired users as depicted in Fig. \ref{fig:cran}. Suppose $\mathcal{B}$ denotes the set of APs and $\mathcal{K}$ denotes the set of mobile users. The set of directed wired links of the backhaul is denoted by $\mathcal{L}$. A path connects a data center and an AP through a sequence of wired links in the backhaul and finally goes through one downlink to reach the end user. The downlinks between APs and users are predetermined according to channel quality, interference levels, and path loss.
\begin{figure}
	\centering
	\includegraphics[width=0.4\textwidth]{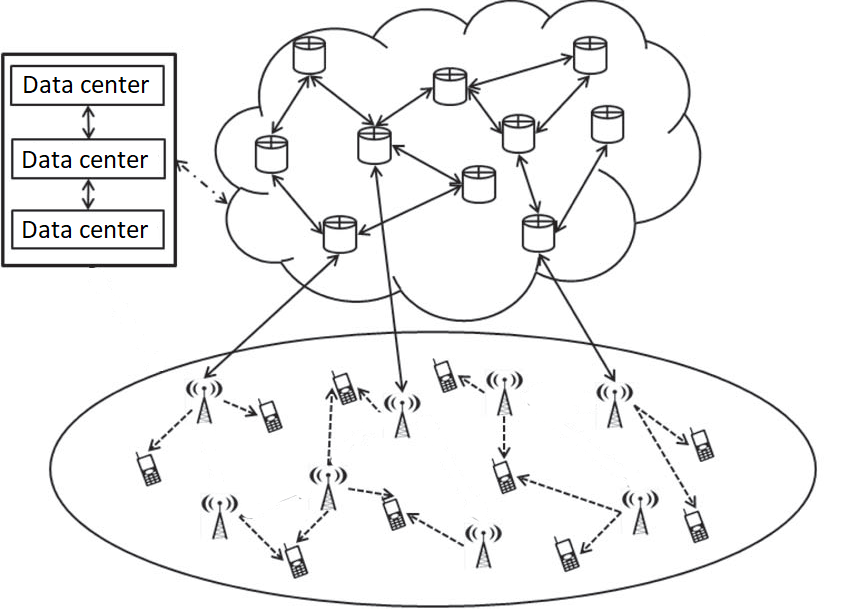}
	\caption{A network comprised of APs and backhaul parts.}
	\label{fig:cran}
\end{figure}

We consider each user demands one commodity and there are $K=|\mathcal{K}|$ datastreams in the backhaul network. The proposed
scheme can be easily extended to the scenario that each user
demands multiple commodities. To serve each user, several candidate paths are selected between the origin and destination, and traffic reservation for the corresponding commodities is implemented over those paths. 
The candidate paths can go through different APs, and the joint transmission of APs to a user (coordinated multi-point mode) is considered in this paper. Only the last hop on each path is wireless.

Each path is denoted by $p$, and the set of all paths is represented by $\mathcal{P}$. The set of paths that carry user $k$ data is denoted by $\mathcal{P}_k$. The backhaul network links comprising path $p$ for serving user $k$ are represented by the set $\mathcal{L}_k^p$. Similarly, the network nodes on path $p\in \mathcal{P}_k$ are denoted by the set $\mathcal{U}_k^p$. The demand of user $k$ is a random variable represented by $d_k$. It follows a certain Probability Density Function (PDF) denoted by $f_k(d_k)$. The corresponding Cumulative Density Function (CDF) is represented by $F_k(d_k)$. Let $r_k$ denote the traffic rate reserved for user $k$. The actual traffic flow of user $k$ supported by the network is a random variable given by
\begin{displaymath}
\min(d_k,r_k)=
\Bigger[8]\{\begin{array}{@{}cl}
r_k,   & \text{if} \:\:\:r_k \leq d_k,\\[3mm]
d_k,              & \text{otherwise}.
\end{array}
\end{displaymath}
We calculate the expected supportable traffic rate for user $k$ as follows:
\begin{align}
\mathbb{E}\left(\min(d_k,r_k)\right)=\int_{0}^{r_k}y_k. f_k(y_k)dy_k+r_k\int_{r_k}^{\infty}f_k(y_k)dy_k.\nonumber
\end{align} 
Since the network is not able to support the demand when it exceeds the reserved rate, we have the minimum in the above expectation. In the first integral, the random demand of user $k$ falls below the reserved rate. In the second integral, the random demand exceeds $r_k$. 

Since a user receives their data from multiple APs, transmission resources should be reserved in multiple APs for the paths available to the user. The resource reservation in the backhaul and RAN is limited by two physical constraints:
\begin{itemize}
	\item The aggregate reserved traffic rate for paths that share a link must not exceed the link capacity. Therefore, we have the following constraint:
	\begin{align}
	\sum_{k=1}^K\sum_{p:\{p\in \mathcal{P}_k,l\in\mathcal{L}_k^p\}}r_k^p \leq C_l, \hspace{.5cm}\forall l \in \mathcal{L},\label{eq:linkcap}
	\end{align}
	where $r_k^p$ is the reserved traffic rate for path $p$ (for serving user $k$). Moreover, the capacity of link $l$ is denoted by $C_l$. Flows on different paths available to one user are treated as separate flows. Thus, we have the inner summation in the above constraint.
	\item The total reserved transmission resources for different paths must not exceed the AP capacity. Hence, we have
	\begin{align}
	\sum_{k=1}^K\sum_{p:\{p\in \mathcal{P}_k,b \in \mathcal{U}_k^p\}}t_k^p \leq C_b, \hspace{.5cm}\forall b \in \mathcal{B},\label{eq:nodecap}
	\end{align}
	where $t_k^p$ is the reserved transmission resources in AP $b$ to transmit incoming data from path $p\in \mathcal{P}_k$ to user $k$. Moreover, the capacity of AP $b$ is denoted by $C_b$.
\end{itemize}
In addition to the above physical constraints, our multi-path model enforces another constraint. Since each datastream originating from a data center splits into a number of sub-flows, we have the following constraint:
\begin{itemize}
	\item The aggregate reserved traffic rate for the different paths which carry data to one user is equal to the reserved rate for that user. Hence, we have the following constraint:
	\begin{align}
	\sum_{p \in \mathcal{P}_k} r_k^p = r_k, \hspace{.5cm}\forall k.\label{eq:split}
	\end{align}
\end{itemize}

In the considered model, we do not make any assumption about the type of the transmission resource. It can be bandwidth, transmission power, or time-slot fraction. Based on the allocated resources, the distribution of the achievable rate of a downlink follows a particular PDF. As only the last hop on each path is wireless, path $p$ uniquely identifies the downlink of the last hop. 
The achievable rate (i.e., instantaneous capacity) of the downlink of path $p$ is random and follows an arbitrary distribution with a PDF represented by $z_k^p(v_k^p,t_k^p)$ and a CDF denoted by $Z_k^p(v_k^p,t_k^p)$. The PDF is a function of two variables: the achievable rate of the downlink, denoted by $v_k^p$, and the allocated transmission resource, denoted by $t_k^p$.  When the achievable rate of a downlink falls below the reserved rate $r_k^p$, some outage is experienced and its amount is $r_k^p-v_k^p$, given that the amount of allocated transmission resources to the downlink is $t_k^p$. The probability that this amount of outage takes place is $z_k^p(v_k^p,t_{k}^{p})$.  In  light of the above arguments, the expected outage of the downlink of path $p$ is obtained as follows:
\begin{align}
\int_{0}^{r_k^p} z_k^p(v_k^p,t_{k}^{p})\:(r_k^p-v_k^p)dv_k^p.\label{eq:out1}
\end{align}
Since the achievable rate is a continuous random variable, we have the above integral. 

In this paper, we aim to maximize the expected traffic of users as much as the network is able to support, while minimizing the expected outage of downlinks. We formulate the following optimization problem to find resource reservations in the backhaul and RAN:\allowdisplaybreaks
\begin{align}\textstyle
	& \underset{\mathbf{r,t}}{\text{max}}
	& & \hspace{-.2cm}\sum_{k=1}^{K}\Big[\mathbb{E}[\min(r_k,d_k)]-\theta_k\hspace{-.15cm}\sum_{p \in \mathcal{P}_k}\hspace{-.15cm}\int_{0}^{r_k^p}\hspace{-.35cm} z_k^p(v_k^p,t_{k}^{p})\:(r_k^p-v_k^p)dv_k^p\Big]\nonumber\\
	& \text{s.t.}
	& &\hspace{-.1cm}\eqref{eq:linkcap},\eqref{eq:nodecap},\eqref{eq:split}, r_k, r_k^p, t_{k}^{p} \geq 0,\hspace{0.5cm} p\in \mathcal{P}_k,\forall k,\label{opt:first}
\end{align}
where $\theta_k:\theta_k\geq 0$ is a coefficient chosen by the system designer that adjusts the priorities of maximizing the expected supportable traffic of user $k$ and the minimization of the aggregate outage of downlinks, which serve user $k$. The two blocks of variables in the above problem are $\mathbf{r}=\{r_k,r_k^p\}_{p\in \mathcal{P}_k, k=1:K}$ and $\mathbf{t}=\{t_k^p\}_{p\in \mathcal{P}_k, k=1:K}$. 
\begin{remark}
	Suppose that multiple paths available to user $k$ share a downlink (the last hop). The aggregate outage of downlinks for serving user $k$ is calculated as follows:
	\begin{align}
	\sum_{w \in \mathcal{W}_k}\int_{0}^{\sum_{p:\{p\in \mathcal{P}_k,w\in p\}}r_k^p}\hspace{-.2cm} z_k^w(v_{k}^w,t_{k}^w)\times(\hspace{-.2cm}\sum_{p:\{p\in \mathcal{P}_k,w\in p\}}\hspace{-.5cm}r_k^p-v_{k}^w)dv_{k}^w\label{eq:mul},
	\end{align}
	where $\mathcal{W}_k$ is the set of downlinks, each denoted by $w$, for serving user $k$. When multiple paths available to user $k$ share a downlink, the above outage is placed in the objective function of \eqref{opt:first} instead of its second term, which includes \eqref{eq:out1}.
\end{remark}

The maximization problem \eqref{opt:first} is not easy to solve to global optimality. The objective function of \eqref{opt:first} is in general not necessarily jointly concave in $\mathbf{r}$ and $\mathbf{t}$ for an arbitrary PDF $z_k^p(v_{k}^p,t_k^p)$. The reason is that $\int_{0}^{r_k^p} \partial^2 z_k^p(v_{k}^p,t_{k}^{p})/(\partial t_{k}^{p})^2\:(r_k^p-v_{k}^p)dv_{k}^p$ is not always non-negative. 
\begin{proposition}
	Given $\mathbf{t}$, the optimization in \eqref{opt:first} becomes concave in $\mathbf{r}$.
\end{proposition}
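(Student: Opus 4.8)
The plan is to separate two observations: the feasible region cut out by \eqref{eq:linkcap}, \eqref{eq:nodecap}, \eqref{eq:split} together with the nonnegativity constraints is a polyhedron in $\mathbf{r}$ (it does not even depend on $\mathbf{t}$), hence convex; so the whole claim reduces to showing that, with $\mathbf{t}$ held fixed, the objective of \eqref{opt:first} is concave in $\mathbf{r}$. I would then split that objective into its two families of terms and treat each separately, since they depend on disjoint coordinates of $\mathbf{r}$.

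First, for the expected-traffic term, I would argue that for every fixed realization $d_k$ the map $r_k \mapsto \min(r_k,d_k)$ is the pointwise minimum of the two affine functions $r_k$ and $d_k$, hence concave; concavity is preserved under integration against the nonnegative density $f_k$, so $\mathbb{E}[\min(r_k,d_k)]$ is concave in $r_k$. Equivalently, one may differentiate the closed form given in the text to get $\tfrac{d}{dr_k}\mathbb{E}[\min(r_k,d_k)] = 1 - F_k(r_k)$, whose derivative $-f_k(r_k)$ is $\le 0$.

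Second, for the outage term, fix $p\in\mathcal{P}_k$ and $t_k^p$ and set $g(r) := \int_{0}^{r} z_k^p(v_k^p,t_k^p)\,(r - v_k^p)\,dv_k^p = r\,Z_k^p(r,t_k^p) - \int_0^r v_k^p z_k^p(v_k^p,t_k^p)\,dv_k^p$. Differentiating with the Leibniz rule, the $r\,z_k^p(r,t_k^p)$ contributions cancel and leave $g'(r) = Z_k^p(r,t_k^p)$, so $g''(r) = z_k^p(r,t_k^p) \ge 0$ because $z_k^p$ is a PDF; hence $g$ is convex, and since $\theta_k\ge 0$ the term $-\theta_k\,g(r_k^p)$ is concave in $r_k^p$. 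Summing the concave expected-traffic terms over $k$ and the concave negated-outage terms over all $p\in\mathcal{P}_k$ yields a concave objective on the polyhedral feasible set, i.e.\ a concave maximization.

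For completeness I would also address the shared-downlink variant of the Remark: there the integrand's upper limit and the bracket both equal $\sum_{p:\{p\in\mathcal{P}_k,w\in p\}} r_k^p$, which is a nonnegative linear form in $\mathbf{r}$, so \eqref{eq:mul} is the composition of the convex function $g$ above with an affine map and is therefore still convex, keeping its negative concave. I do not expect a genuine obstacle; the only point needing a little care is the justification of differentiation under the integral sign when computing $g'$ and $g''$, which follows from the continuity and integrability already assumed for $f_k$ and $z_k^p$.
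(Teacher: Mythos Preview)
Your proposal is correct and follows essentially the same approach as the paper: both arguments rely on the separability of the objective in the coordinates of $\mathbf{r}$, compute the second derivatives $-f_k(r_k)$ and $-z_k^p(r_k^p,t_k^p)$ (the paper organizes these into a diagonal Hessian), and observe that the constraints are affine. Your write-up is in fact somewhat more thorough, since you explicitly carry out the Leibniz computation for the outage term and also cover the shared-downlink variant of the Remark via composition with an affine map.
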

\begin{proof}
	Fixing $\mathbf{t}$, the objective function is separable in $k$. We find the Hessian with respect to $r_k$ and $\{r_k^p\}_{p\in\mathcal{P}_k}$ for those objective function terms which are associated with user $k$ as follows:
	\begin{align}
	&\mathbf{H}_k=\begin{pmatrix} 
	-f_k(r_k)& 0 &\dots&0\\ 0 & -z_k^1(r_k^1,t_{k}^1)&\dots&0\\\vdots  & \vdots&\ddots&\vdots\\0&0&\dots&-z_k^{|\mathcal{P}_k|}(r_k^{|\mathcal{P}_k|},t_{k}^{|\mathcal{P}_k|})
	\end{pmatrix}.\nonumber
	\end{align}
	The overall Hessian matrix is
	\begin{align}
	&\mathbf{H}=\begin{pmatrix} 
	\mathbf{H}_1&  & \mathbf{0}\\  & \ddots&\\\mathbf{0}& & \mathbf{H}_K
	\end{pmatrix}.\nonumber
	\end{align}
	It is observed that the above matrix is negative semidefinite. Since the constraints of problem \eqref{opt:first} are all affine, it follows that the maximization \eqref{opt:first} is concave with fixed $\textbf{t}$. 
\end{proof}
Separable constraints on $\textbf{r}$ and $\textbf{t}$ in  \eqref{opt:first} motivate the BCD algorithm. It is straightforward to show that with \eqref{eq:mul} instead of \eqref{eq:out1} in the objective function, the optimization in \eqref{opt:first} remains concave in $\textbf{r}$.

\section{Distributed Multi-Path Routing in the Backhaul}\label{sec:alg}
This section is concerned with solving \eqref{opt:first} when $\mathbf{t}$ is kept fixed, and \eqref{opt:first} is converted to the minimization format after multiplying the objective function by $-1$. In particular, we study a general multi-path routing to minimize any convex cost function with a Lipschitz continuous gradient. We develop an algorithm that is dual-based and decomposes the problem down to link-level and parallelizes computations across links of the network. The required computation time for each iteration of the proposed multi-path routing algorithm is equal to that for one link regardless of the network size. This interesting property makes the proposed algorithm appropriate for the online optimization of large networks.

For each datastream in the network, several candidate paths are selected. We assume that each flow can be split into multiple sub-flows.  To formulate the multi-path routing problem, we first assume that the cost function is separable in variables, i.e., $\psi(\mathbf{r})=\sum_{k=1}^{K}\sum_{p\in \mathcal{P}_k}\psi_k^p(r_k^p)$, where each $\psi_k^p(r_k^p)$ is strictly convex. 

The optimization problem for the multi-path flow routing can be written as follows:
\allowdisplaybreaks
\begin{equation}
\begin{aligned}
& \underset{\mathbf{r}}{\min}
& & \sum_{k=1}^{K}\sum_{p\in \mathcal{P}_k}\psi_k^p(r_k^p)\label{opt:ronly}\\
& \text{s.t.}
& & \eqref{eq:linkcap}, r_k^p \geq 0,\hspace{0.5cm} p\in \mathcal{P}_k,\forall k.
\end{aligned}
\end{equation} 
Since typically the number of variables is greater than the number of constraints in the above optimization, solving the problem is easier in the dual domain.
The Lagrangian function for the above problem is
\begin{align}
&L_c(\mathbf{r},\boldsymbol{\mu},\boldsymbol{\phi})=\sum_{k=1}^{K}\sum_{p\in \mathcal{P}_k}\hspace{-.1cm}\psi_k^p(r_k^p)+\sum_{l\in \mathcal{L}}\mu_l(\sum_{k=1}^K\sum_{p:\{p\in \mathcal{P}_k,l\in\mathcal{L}_k^p\}}\hspace{-.7cm}r_k^p - C_l)\nonumber\\
&-\sum_{k=1}^{K}\sum_{p\in \mathcal{P}_k}\phi_k^p r_k^p,\label{eq:lag}
\end{align}
where $\mu_l:\mu_l\geq 0$ is the Lagrange multiplier for the capacity constraint of link $l$, and $\phi_k^p:\phi_k^p\geq 0$ is the Lagrange multiplier for constraint $r_k^p\geq 0$. Furthermore, $\boldsymbol{\mu}=\{\mu_l\}_{l\in\mathcal{L}}$ and $\boldsymbol{\phi}=\{\phi_k^p\}_{p\in \mathcal{P}_k, k=1:K}$.
We find the dual problem of \eqref{opt:ronly} as follows:
\begin{equation}
\begin{aligned}
& \underset{\boldsymbol{\mu},\boldsymbol{\phi}}{\text{max}}
& & \min _{\mathbf{r}}L_c(\mathbf{r},\boldsymbol{\mu},\boldsymbol{\phi})\label{opt:dual1}\\
& \text{s.t.}
& & \boldsymbol{\mu} \geq \mathbf{0},\boldsymbol{\phi}\geq \mathbf{0}.
\end{aligned}
\end{equation} 
For many cost functions, no closed-form solution for $\mathbf{r}=\arg\min _{\mathbf{r}}L_c(\mathbf{r},\boldsymbol{\mu},\boldsymbol{\phi})$ exists. Therefore, commonly, the above problem is solved via a primal-dual method such as ADMM \cite{moehle2017distributed,liao2018distributed,nguyen2017parallel,wu2019toward}. However, the auxiliary link variables introduced to make the per-flow subproblems of optimization in \eqref{opt:ronly} separable can slow down ADMM in practice. 

Resource level decomposition for large-scale single-path applications was first proposed in \cite{kemmer2012dynamic} to solve the revenue management problems in airline networks. The proposed decomposition in \cite{kemmer2012dynamic} does not involve any auxiliary variables. In spite of the absence of convergence or solution enhancement guarantees, the resource level decomposition has been rather successful in practice. We leverage resource level decomposition ideas to develop a distributed algorithm to solve the general multi-path routing problem \eqref{opt:ronly} in a parallel fashion such that the traffic passing on each link can be obtained independently from the other links. Unlike the dynamic programming approach in \cite{kemmer2012dynamic}, an optimization-based approach is proposed here to solve subproblems.
In each iteration, the proposed dual algorithm decomposes the problem in \eqref{opt:dual1} and solves the subproblems globally and in parallel. The optimized $\mu_l$ in the $j^{\text{th}}$ iteration of the proposed algorithm is denoted by $\mu_l^j$. Here, we explain the decomposition. The dualized link capacity constraints $\sum_{l\in \mathcal{L}}\mu_l(\sum_{k=1}^K\sum_{p:\{p\in \mathcal{P}_k,l\in\mathcal{L}_k^p\}}r_k^p - C_l)$ in the Lagrangian \eqref{eq:lag} are separable across links. Each link $l$ receives $\mu_l(\sum_{k=1}^K\sum_{p:\{p\in \mathcal{P}_k,l\in\mathcal{L}_k^p\}}r_k^p - C_l)$. In each iteration, based on  $\boldsymbol{\mu}^{j-1}=\{\mu_l^{j-1}\}_{l\in \mathcal{L}}$ in the previous iteration, we decompose the non-separable terms in the Lagrangian \eqref{eq:lag},  which include $r_k^p$, across links on path $p$. Each link $l$ of path $p$ receives a portion of
	\begin{align}
	\alpha_{k,l}^{p,j}=\mu_l^{j-1}/\sum_{l'\in \mathcal{L}_k^p}\mu_{l'}^{j-1},\label{eq:alpha}
	\end{align} 
	In the $j^{\text{th}}$ iteration, the decomposed per-link Lagrangian function is as follows:
	\begin{align}
	&L_l(\mathbf{r}_l,\mu_l,\boldsymbol{\phi}_l,\boldsymbol{\mu}^{j-1})=\sum_{k=1}^K\sum_{p:\{p\in \mathcal{P}_k,l\in\mathcal{L}_k^p\}}\alpha_{k,l}^{p,j}\psi_k^p(r_k^p)\nonumber\\
	&+\mu_l(\sum_{k=1}^K\sum_{p:\{p\in \mathcal{P}_k,l\in\mathcal{L}_k^p\}}\hspace{-.5cm}r_k^p - C_l)-\sum_{k=1}^K\sum_{p:\{p\in \mathcal{P}_k,l\in\mathcal{L}_k^p\}}\hspace{-.4cm}\alpha_{k,l}^{p,j}\phi_k^p r_k^p,\label{eq:decomlag}
	\end{align}
	where $\mathbf{r}_l=\{r_k^p\}_{p\in \mathcal{P}_k,l \in \mathcal{L}_k^p, k=1:K}$ and $\boldsymbol{\phi}_l=\{\phi_k^p\}_{p\in \mathcal{P}_k,l \in \mathcal{L}_k^p, k=1:K}$. We notice that based on \eqref{eq:alpha}, $\{\alpha_{k,l}^{p,j}\}_{l\in \mathcal{L}_k^p}$ in \eqref{eq:decomlag} is calculated using $\boldsymbol{\mu}^{j-1}$. Based on the above decomposition, we obtain
\begin{align}
L_c(\mathbf{r},\boldsymbol{\mu},\boldsymbol{\phi})=\sum_{l\in \mathcal{L}}L_l(\mathbf{r}_l,\mu_l,\boldsymbol{\phi}_l,\boldsymbol{\mu}^{j-1}). \label{eq:equality}
\end{align}
Instead of solving the problem in \eqref{opt:dual1}, we solve
\begin{equation}
\begin{aligned}
&\hspace{-.4cm} \underset{\{\mu_l,\boldsymbol{\phi}_l\}_{l\in \mathcal{L}}}{\text{max}}
& & \sum_{l\in \mathcal{L}}\min _{\mathbf{r}_l}L_l(\mathbf{r}_l,\mu_l,\boldsymbol{\phi}_l,\boldsymbol{\mu}^{j-1})\label{opt:dual2}\\
& \text{s.t.}
& & \mu_l \geq 0,\boldsymbol{\phi}_l\geq \mathbf{0},\hspace{0.5cm}l\in \mathcal{L},
\end{aligned}
\end{equation} 
iteratively and then update $\alpha_{k,l}^{p,j+1}$ for iteration $j+1$. The above problem is decomposable in $\{\mu_l,\boldsymbol{\phi}_l\}$ and can be solved in parallel for all links. Due to strong duality \cite[p. 226--p. 227]{boyd2004convex}, each subproblem of \eqref{opt:dual2} is equivalent to the following per-link problem in the primal domain:
\begin{equation}
\begin{aligned}
& \underset{\mathbf{r}_l}{\min}
& & \sum_{k=1}^K\sum_{p:\{p\in \mathcal{P}_k,l\in\mathcal{L}_k^p\}}\alpha_{k,l}^{p,j}\psi_k^p(r_k^p)\label{opt:perlink}\\
& \text{s.t.}
& &\sum_{k=1}^K\sum_{p:\{p\in \mathcal{P}_k,l\in\mathcal{L}_k^p\}}r_k^p \leq C_l,  \\
& & &\alpha_{k,l}^{p,j}r_k^p \geq 0,\hspace{0.5cm}  p\in \mathcal{P}_k, \forall k, l\in \mathcal{L}_k^p.
\end{aligned}
\end{equation} 
The optimal $r_k^p$ and $\mu_l$ can be obtained using the first-order optimality condition for the per-link subproblem in \eqref{opt:perlink}. Here, we list KKT conditions as follows:
\allowdisplaybreaks
\begin{subequations}
	\begin{align}
	&\frac{\partial L_l(\mathbf{r}_l,\mu_l,\boldsymbol{\phi}_l,\boldsymbol{\mu}^{j-1})}{\partial r_k^p}=\alpha_{k,l}^{p,j}\frac{\partial \psi_k^p(r_k^p)}{\partial r_k^p}+\mu_l-\alpha_{k,l}^{p,j}\phi_k^p=0,\label{eq:findr}\\
	&\sum_{k=1}^K\sum_{p:\{p\in \mathcal{P}_k,l\in\mathcal{L}_k^p\}}r_k^p \leq C_l, \label{eq:findr11}\\
	& \mu_l\: (\sum_{k=1}^K\sum_{p:\{p\in \mathcal{P}_k,l\in\mathcal{L}_k^p\}}r_k^p - C_l)=0, \:\:\mu_l\geq 0, \label{eq:findr1}
	\\&	 \alpha_{k,l}^{p,j}r_k^p\:\phi_k^p=0, \:\:r_k^p \geq 0,\:\:\phi_k^p\geq 0.\label{eq:findr2}
	\end{align}
\end{subequations}
First, we consider that $r_k^p>0$ and $\phi_k^p=0$. Due to the strict convexity of $\psi_k^p(r_k^p)$, $\partial \psi^p_k(r_k^p)/\partial r^p_k$ is strictly increasing. Thus, given $\mu_l$, there is a unique $r_k^p$ to solve $\alpha_{k,l}^{p,j}\partial \psi_k^p(r_k^p)/\partial r_k^p+\mu_l=0$.
Since $\partial \psi^p_k(r_k^p)/\partial r^p_k$ is strictly increasing, we implement a bisection search on $r_k^p$ in the non-negative orthant $r_k^p\geq 0$ to find $r_k^p$ from $\alpha_{k,l}^{p,j}\partial \psi_k^p(r_k^p)/\partial r_k^p+\mu_l=0$. If the obtained $r_k^p$ is positive, we keep $\phi_k^p=0$. Otherwise, we set $r_k^p=0$ and find $\phi_k^p=\partial \psi_k^p(r_k^p)/\partial r_k^p|_{r_k^p=0}+\mu_l/\alpha_{k,l}^{p,j}$. For a given $\mu_l$, we obtain each $r_k^p$ variable associated with link $l$, i.e., $r_k^p: p\in\mathcal{P}_k,l\in\mathcal{L}_k^p,k=1:K$. The dual approach for solving the optimization in \eqref{opt:perlink} works as follows: implement a bisection search 
on the Lagrange multiplier $\mu_l$ in the positive orthant and numerically find each $r_k^p$ variable from \eqref{eq:findr} and \eqref{eq:findr2} for each $\mu_l$ until we have $\sum_{p:\{p\in \mathcal{P}_k,l\in\mathcal{L}_k^p\}}r_k^p = C_l$. If there is no such positive $\mu_l$,  we drop the first constraint from optimization \eqref{opt:perlink} and solve \eqref{opt:perlink} by setting the gradient  of the cost function to zero. Then, we project the solution to the positive orthant. Due to the strict convexity of each subproblem, the optimal primal variables are unique. The optimal $\mu_l$ for each per-link subproblem is also unique. We justify this claim.
\begin{algorithm}[t!]
	0.	\textbf{Initialization}  $s_1=0$, $s_2=\text{large number}$, $q_1=0$, $q_2=0$, $q_3=0$\;
	\Repeat{$s_2-s_1$ is small enough}{
		1. $s_3=(s_1+s_2)/2$\;
		2. Implement a bisection search to solve \eqref{eq:findr} with $\phi_k^p=0$ and find $r_k^p:r_k^p\geq 0$, where $\mu_l=s_1$\;
		3. \If{there is no positive solution for $r_k^p$}{
			$r_k^p=0$ and $\phi_k^p=\partial \psi_k^p(r_k^p)/\partial r_k^p|_{r_k^p=0}+\mu_l/\alpha_{k,l}^{p,j}$\;
		}
		4. $q_1=\sum_{k=1}^K\sum_{p:\{p\in \mathcal{P}_k,l\in\mathcal{L}_k^p\}}r_k^p - C_l$\;
		5. Implement a bisection search to solve \eqref{eq:findr} with $\phi_k^p=0$ and find $r_k^p:r_k^p\geq 0$, where $\mu_l=s_2$\;
		6. \If{there is no positive solution for $r_k^p$}{
			$r_k^p=0$ and $\phi_k^p=\partial \psi_k^p(r_k^p)/\partial r_k^p|_{r_k^p=0}+\mu_l/\alpha_{k,l}^{p,j}$\;
		}
		7. $q_2=\sum_{k=1}^K\sum_{p:\{p\in \mathcal{P}_k,l\in\mathcal{L}_k^p\}}r_k^p - C_l$\;
		8. Implement a bisection search to solve \eqref{eq:findr} with $\phi_k^p=0$ and find $r_k^p:r_k^p\geq 0$, where $\mu_l=s_3$\;
		9. \If{there is no positive solution for $r_k^p$}{
			$r_k^p=0$ and $\phi_k^p=\partial \psi_k^p(r_k^p)/\partial r_k^p|_{r_k^p=0}+\mu_l/\alpha_{k,l}^{p,j}$\;
		}
		10. $q_3=\sum_{k=1}^K\sum_{p:\{p\in \mathcal{P}_k,l\in\mathcal{L}_k^p\}}r_k^p - C_l$\;
		11. \If{$q_1.q_3 <0$}{
			$s_2=s_3$\;
		}
		12. \If{$q_2.q_3 <0$}{
			$s_1=s_3$\;
		}
		13. \If{$q_1 < 0$, $q_2< 0$, $q_3 <0$ }{
			13.1. $\mu_l=0$\;
			13.2. Solve $\partial \psi_k^p(r_k^p)/\partial r_k^p=0$ to find $r_k^p$\; 
			13.3. Project the obtained $r_k^p$ variable to the positive orthant\;
			13.4. $s_2=s_1$\;
		}
	}     
	\caption{Dual algorithm to solve the per-link optimization in \eqref{opt:perlink}}
	\label{al:bi}
\end{algorithm}
 If the link capacity constraint is not tight, then due to \eqref{eq:findr1}, $\mu_l$ has to be zero. If the link capacity is tight, then at least one $r_k^p: p\in \mathcal{P}_k,l\in\mathcal{L}_k^p,k=1:K$ is non-zero and $\phi_k^p=0$. Due to a) the strict convexity of $\psi^p_k(r_k^p)$ and the monotone variation of $\psi^p_k(r_k^p)/\partial r^p_k$; and b) the uniqueness of the optimal $r_k^p$, the obtained $\mu_l$ from \eqref{eq:findr} is unique. We justify the bisection search on $\mu_l$ as follows: if the unique optimal $\mu_l$ is positive, from \eqref{eq:findr1}, we observe that we must have $\sum_{p:\{p\in \mathcal{P}_k,l\in\mathcal{L}_k^p\}}r_k^p = C_l$, where each $r_k^p$ is found from \eqref{eq:findr} and \eqref{eq:findr2}. Such positive $\mu_l$ can be uniquely found using a bisection search due to the strictly monotone variation of $\sum_{p:\{p\in \mathcal{P}_k,l\in\mathcal{L}_k^p\}}r_k^p,$ with $\mu_l$ (strict convexity of $\psi_k^p(r_k^p)$ as explained above). If the optimal $\mu_l$ is zero, then \eqref{eq:findr11} and \eqref{eq:findr1} are already satisfied and it is enough to find the unique non-negative minimizer of each $\psi_k^p(r_k^p)$ from \eqref{eq:findr} and \eqref{eq:findr2}. In light of the above arguments, two nested bisection methods are required to solve \eqref{opt:perlink}: the inner bisection works on $r_k^p$ and the outer one works on $\mu_l$. 
The summary of the proposed bisection approach to solve the per-link optimization in \eqref{opt:perlink} is given in Algorithm \ref{al:bi}. 

\begin{algorithm}[t!]
	0.	\textbf{Initialization} Assign some small positive number to each $\mu_l^0$, $j=0$; \\
	\Repeat{$\boldsymbol{\mu}^j$ converge}{
		\For{all links}{ 
			1. Find  $\alpha_{k,l}^{p,j+1}=\mu_l^j/\sum_{l'\in \mathcal{L}_k^p}\mu_{l'}^j$\;
			\If{$\mu_l^j >0$}{2. Apply Algorithm \ref{al:bi} to find $\mu_l^{j+1}$\;
			}
			3. $j=j+1$\;
		}
	}     
	\For{all $\{r_k^p\}_{p\in\mathcal{P}_k,k=1:K}$ variables}{4. Use the latest computed $r_k^p$ by Algorithm \ref{al:bi} from a per-link subproblem, where $l\in \mathcal{L}_k^p$ and $\mu_l^j>0$\;}
	\caption{Multi-path routing algorithm to solve the optimization in \eqref{opt:ronly}}
	\label{al:parallel}
\end{algorithm}
Suppose that the optimization in \eqref{opt:perlink} is iteratively solved in parallel for all links of the network. For a link with a large capacity, the link capacity constraint is not tight and Algorithm \ref{al:bi} finds $\mu_l^j=0$ and we have $\alpha_{k,l}^{p,j+1}=0$. For those links, we do not need to continue computation as the KKT conditions listed in \eqref{eq:findr}--\eqref{eq:findr2} remain satisfied. In the following iterations, we ignore those links and consider links with $\mu_l^j > 0$. We alternate between solving the optimization in \eqref{opt:perlink} in parallel for all links and updating $\alpha_{k,l}^{p,j+1}$ until all $\{\mu_l^j\}_{l\in\mathcal{L}}$ variables converge, i.e., $\norm{\boldsymbol{\mu}^{j}-\boldsymbol{\mu}^{j-1}}_2<\epsilon$. Once $\boldsymbol{\mu}^{j}$ converges, for each $r_k^p$ variable, we use the computed $r_k^p$ in the last iteration of Algorithm \ref{al:bi} from a subproblem with $\mu_l^j>0, l\in \mathcal{L}_k^p$. A brief description of the proposed dual algorithm for solving the optimization in \eqref{opt:ronly} is given in Algorithm \ref{al:parallel}.

After Algorithm \ref{al:parallel} converges, we use the obtained $r_k^p$ from a per-link problem with tight link capacity constraint, i.e., $l: \mu_l^j>0$, for the other links on that path for which Algorithm \ref{al:parallel} finds $\mu_l^j=0$. The key property of Algorithm \ref{al:parallel} is that after convergence, the obtained $r_k^p$ on different links of one path are identical.
\begin{proposition}
	Upon convergence of Algorithm \ref{al:parallel}, the flow rates across links on each path are identical. 
\end{proposition}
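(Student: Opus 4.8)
The plan is to read off the claim directly from the weighting rule \eqref{eq:alpha} and the per-link first-order system \eqref{eq:findr}--\eqref{eq:findr2}. Fix a user $k$ and a path $p\in\mathcal{P}_k$, let $\{\mu_l^j\}_{l\in\mathcal{L}}$ be the limit point reached by Algorithm \ref{al:parallel}, and for each link $l\in\mathcal{L}_k^p$ denote by $r_{k,l}^p$ the value of $r_k^p$ returned by Algorithm \ref{al:bi} for the per-link subproblem of link $l$ and by $\phi_{k,l}^p$ the associated multiplier of $r_k^p\ge 0$. Since at convergence $\boldsymbol{\mu}^j=\boldsymbol{\mu}^{j-1}$, for every $l\in\mathcal{L}_k^p$ with $\mu_l^j>0$ the weight is $\alpha_{k,l}^{p,j}=\mu_l^j/S_k^p$ with $S_k^p:=\sum_{l'\in\mathcal{L}_k^p}\mu_{l'}^j>0$, and the triple $(r_{k,l}^p,\mu_l^j,\phi_{k,l}^p)$ satisfies \eqref{eq:findr}--\eqref{eq:findr2} with these converged weights.

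First I would treat the generic case $r_{k,l}^p>0$. Then \eqref{eq:findr2} forces $\phi_{k,l}^p=0$, and dividing \eqref{eq:findr} by $\alpha_{k,l}^{p,j}=\mu_l^j/S_k^p$ gives
\begin{align}
\frac{\partial \psi_k^p(r_{k,l}^p)}{\partial r_k^p}=-\frac{\mu_l^j}{\alpha_{k,l}^{p,j}}=-S_k^p,\nonumber
\end{align}
whose right-hand side does \emph{not} depend on $l$. Because each $\psi_k^p$ is strictly convex, $\partial\psi_k^p/\partial r_k^p$ is strictly increasing, hence injective, so $r_{k,l}^p$ takes one and the same value $\bar r_k^p$ for all links $l\in\mathcal{L}_k^p$ with $\mu_l^j>0$.

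Next I would dispose of the remaining possibility, $r_{k,l}^p=0$ at some link $l\in\mathcal{L}_k^p$ with $\mu_l^j>0$. Here \eqref{eq:findr} yields $\phi_{k,l}^p=\partial\psi_k^p(0)/\partial r_k^p+S_k^p\ge 0$, i.e. $\partial\psi_k^p(0)/\partial r_k^p\ge -S_k^p$. If another link $l'\in\mathcal{L}_k^p$ with $\mu_{l'}^j>0$ had $r_{k,l'}^p>0$, the previous step would give $\partial\psi_k^p(r_{k,l'}^p)/\partial r_k^p=-S_k^p$, while strict monotonicity of the derivative together with $r_{k,l'}^p>0$ would give $\partial\psi_k^p(0)/\partial r_k^p<\partial\psi_k^p(r_{k,l'}^p)/\partial r_k^p=-S_k^p$, a contradiction. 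Thus on any path carrying a positive multiplier the computed rates are either all zero or all equal to $\bar r_k^p$ — in both cases identical. Finally, for the links $l\in\mathcal{L}_k^p$ with $\mu_l^j=0$ the rate is, by the assignment rule of Algorithm \ref{al:parallel} (step 4), copied from a link of $p$ having a positive multiplier, hence equal to the common value; and on a path all of whose links satisfy $\mu_l^j=0$ each per-link subproblem reduces to the unconstrained minimization $\partial\psi_k^p(r_k^p)/\partial r_k^p=0$ projected onto $r_k^p\ge 0$, whose minimizer is unique by strict convexity, so the rates coincide there as well.

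I expect the main obstacle to be the bookkeeping of the degenerate configurations — the zero-rate links and the paths all of whose links have a vanishing multiplier — rather than the central identity, which drops out the moment \eqref{eq:alpha} is substituted into \eqref{eq:findr} and the factor $\mu_l^j$ cancels. One should also make explicit that the limit point of Algorithm \ref{al:parallel} genuinely satisfies the per-link optimality system \eqref{eq:findr}--\eqref{eq:findr2} with the \emph{converged} weights $\alpha_{k,l}^{p,j}$, since this is exactly what licenses plugging \eqref{eq:alpha} into \eqref{eq:findr}.
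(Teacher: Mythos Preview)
Your proof is correct and follows essentially the same route as the paper: substitute the converged weights \eqref{eq:alpha} into the per-link stationarity condition \eqref{eq:findr}, divide through by $\alpha_{k,l}^{p,j}$, and observe that the resulting equation $\partial\psi_k^p/\partial r_k^p+S_k^p-\phi_k^p=0$ is independent of $l$, whence strict convexity forces a common rate. The paper carries out this same cancellation via an $\epsilon$-perturbation argument (writing $\boldsymbol{\mu}^j-\boldsymbol{\mu}^{j-1}=\boldsymbol{\vartheta}$ and letting $\epsilon\to 0$) rather than assuming exact convergence, and it does not separately spell out the zero-rate and all-zero-multiplier degenerate cases that you handle explicitly.
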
 
\begin{proof}
	Algorithm \ref{al:parallel} finds $r_k^p$, from the per-link subproblem for link $l\in\mathcal{L}_k^p,\mu_l^j>0$, using the following equation:
	\begin{align}
	&\frac{\partial L_l(\mathbf{r}_l,\mu_l^j,\boldsymbol{\phi}_l^j,\boldsymbol{\mu}^{j-1})}{\partial r_k^p}=\alpha_{k,l}^{p,j}\frac{\partial \psi_k^p(r_k^p)}{\partial r_k^p}+\mu_l^j-\alpha_{k,l}^p\phi_k^{p,j}\nonumber\\
	&=\frac{\mu_l^{j-1}}{\sum_{l'\in \mathcal{L}_k^p}\mu_{l'}^{j-1}}\frac{\partial \psi_k^p(r_k^p)}{\partial r_k^p}+\mu_l^j-\frac{\mu_l^{j-1}}{\sum_{l'\in \mathcal{L}_k^p}\mu_{l'}^{j-1}}\phi_k^{p,j}=0.\label{eq:cons1}
	\end{align} 
	Suppose Algorithm \ref{al:parallel} has converged in the $j^{\text{\text{th}}}$ iteration; we have  $\norm{\boldsymbol{\mu}^{j}-\boldsymbol{\mu}^{j-1}}_2<\epsilon$. Then, we have $\boldsymbol{\mu}^{j}-\boldsymbol{\mu}^{j-1}=\boldsymbol{\vartheta}$, where $\norm{\boldsymbol{\vartheta}}_2<\epsilon$. We have
	\begin{align}
	\frac{1}{\alpha_{k,l}^{p,j}}=\frac{\sum_{l'\in \mathcal{L}_k^p}\mu_{l'}^{j-1}}{\mu_l^{j-1}}=\frac{\sum_{l'\in \mathcal{L}_k^p}(\mu_{l'}^{j}-\vartheta_{l'})}{\mu_l^{j}-\vartheta_l}.\nonumber
	\end{align}
	We multiply \eqref{eq:cons1} by $1/\alpha_{k,l}^{p,j}$ and we have the following:
	\begin{align}
	&\frac{\partial \psi_k^p(r_k^p)}{\partial r_k^p}+\frac{\sum_{l'\in \mathcal{L}_k^p}(\mu_{l'}^{j}-\vartheta_{l'})}{\mu_l^{j}-\vartheta_l}\mu_l^j-\phi_k^{p,j}\nonumber\\
	&=\frac{\partial \psi_k^p(r_k^p)}{\partial r_k^p}+\sum_{l'\in \mathcal{L}_k^p}(\mu_{l'}^{j}-\vartheta_{l'})(1+\frac{\vartheta_l}{\mu_l^{j}-\vartheta_l})-\phi_k^{p,j}\nonumber\\
	&=\underbrace{\frac{\partial \psi_k^p(r_k^p)}{\partial r_k^p}+\sum_{l\in \mathcal{L}_k^p}\mu_{l}^{j}-\phi_k^{p,j}}_{\partial L_c(\mathbf{r},\boldsymbol{\mu}^{j},\boldsymbol{\phi}^{j})/\partial r_k^p}-\sum_{l'\in \mathcal{L}_k^p}\vartheta_{l'}(1+\frac{\vartheta_l}{\mu_l^{j}-\vartheta_l})\nonumber\\
	&+\frac{\vartheta_l(\sum_{l'\in \mathcal{L}_k^p}\mu_{l'}^j)}{\mu_l^{j}-\vartheta_l}=\frac{\partial L_c(\mathbf{r},\boldsymbol{\mu}^{j},\boldsymbol{\phi}^{j})}{\partial r_k^p}-\sum_{l'\in \mathcal{L}_k^p}\vartheta_{l'}(1+\frac{\vartheta_l}{\mu_l^{j}-\vartheta_l})\nonumber\\
	&+\frac{\vartheta_l(\sum_{l'\in \mathcal{L}_k^p}\mu_{l'}^j)}{\mu_{l}^{j}-\vartheta_l}=0.\label{eq:cons2}
	\end{align}
	When $\epsilon$ tends to zero, then $\boldsymbol{\vartheta}\rightarrow \mathbf{0}$ and from \eqref{eq:cons2} we find $\partial L_c(\mathbf{r},\boldsymbol{\mu}^j,\boldsymbol{\phi}^j)/\partial r_k^p=0$. 
	Moreover, we observe that $\partial L_c(\mathbf{r},\boldsymbol{\mu}^j,\boldsymbol{\phi}^j)/\partial r_k^p$ is independent of the link index on path $p$. This means that $\{r_k^p\}_{p\in\mathcal{P}_k}$ variables obtained
	by solving the link subproblems are identical for all links along each path $p$ for which $\mu_l^j>0$. They are also equal to the minimizer of Lagrangian function $L_c(\mathbf{r},\boldsymbol{\mu}^j,\boldsymbol{\phi}^j)$ in \eqref{eq:lag}.
\end{proof}

\begin{theorem}\label{thr:1}
	If $\psi(\mathbf{r})$ is strictly convex and separable, then the primal and dual iterates of Algorithm \ref{al:parallel} will converge to the optimal primal and dual solutions of \eqref{opt:ronly}.
\end{theorem}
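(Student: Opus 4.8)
The plan is to reinterpret Algorithm~\ref{al:parallel} as a minorize--maximize iteration — a single-block instance of the BSUM framework of \cite{razaviyayn2013unified} — applied to the concave dual function of \eqref{opt:ronly},
\begin{align}
g(\boldsymbol\mu)=\min_{\mathbf r\ge\mathbf 0}\ L_c(\mathbf r,\boldsymbol\mu,\mathbf 0),\nonumber
\end{align}
where $L_c(\cdot,\cdot,\mathbf 0)$ denotes \eqref{eq:lag} with $\boldsymbol\phi=\mathbf 0$ and the constraints $r_k^p\ge 0$ imposed directly; $g$ is concave as a pointwise minimum of affine functions of $\boldsymbol\mu$. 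Because every constraint of \eqref{opt:ronly} is affine and $\mathbf r=\mathbf 0$ makes all link constraints strict (the capacities being positive), strong duality holds and the dual optimum is attained; moreover $g(\boldsymbol\mu)\le\sum_{k=1}^{K}\sum_{p\in\mathcal P_k}\psi_k^p(0)-\sum_{l\in\mathcal L}C_l\mu_l\to-\infty$ as $\norm{\boldsymbol\mu}_1\to\infty$ over $\boldsymbol\mu\ge\mathbf 0$, so all superlevel sets of $g$ are bounded. It therefore suffices to show that the dual iterates climb and converge to a maximizer of $g$, and that the primal iterates converge to the (unique, by strict convexity) optimum of \eqref{opt:ronly} with the limit satisfying its KKT system.

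For the dual part I would introduce, at each iteration $j$, the separable surrogate
\begin{align}
\widetilde g_{\boldsymbol\mu^{j-1}}(\boldsymbol\mu)=\sum_{l\in\mathcal L}\ \min_{\mathbf r_l\ge\mathbf 0}\ L_l(\mathbf r_l,\mu_l,\boldsymbol\phi_l,\boldsymbol\mu^{j-1}),\nonumber
\end{align}
and verify three properties. \emph{Global minorization:} $\widetilde g_{\boldsymbol\mu^{j-1}}(\boldsymbol\mu)\le g(\boldsymbol\mu)$ for all $\boldsymbol\mu\ge\mathbf 0$; this follows from \eqref{eq:equality} together with the fact that the weights $\{\alpha_{k,l}^{p,j}\}_{l\in\mathcal L_k^p}$ sum to one on every path (by \eqref{eq:alpha}), so replacing the joint minimization over a path variable by a minimization over independent per-link copies can only lower the value (``the sum of the minima is at most the minimum of the sum''). \emph{Tangency:} $\widetilde g_{\boldsymbol\mu^{j-1}}(\boldsymbol\mu^{j-1})=g(\boldsymbol\mu^{j-1})$, since at $\boldsymbol\mu=\boldsymbol\mu^{j-1}$ one has $\mu_l/\alpha_{k,l}^{p,j}=\sum_{l'\in\mathcal L_k^p}\mu_{l'}^{j-1}$ for every $l\in\mathcal L_k^p$, so the per-link minimizers coincide and the minorization inequality becomes an equality — this is precisely the identity $\partial L_c(\mathbf r,\boldsymbol\mu^{j},\boldsymbol\phi^{j})/\partial r_k^p=0$ established in the proof of Proposition~2. \emph{First-order consistency:} $\nabla\widetilde g_{\boldsymbol\mu^{j-1}}(\boldsymbol\mu^{j-1})=\nabla g(\boldsymbol\mu^{j-1})$, which I would obtain from the envelope theorem applied to each inner minimization (the $\mu_l$-derivative of either function equals the sum of the relevant path-minimizers minus $C_l$, and those minimizers agree at the anchor point by tangency). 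Finally, by strong duality for the per-link program \eqref{opt:perlink} and the behavior of Algorithm~\ref{al:bi}, the produced $\boldsymbol\mu^{j}$ is the (unique, as argued in the text) maximizer of the separable function $\widetilde g_{\boldsymbol\mu^{j-1}}$ over $\boldsymbol\mu\ge\mathbf 0$.

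These ingredients yield the monotone ascent $g(\boldsymbol\mu^{j})\ge\widetilde g_{\boldsymbol\mu^{j-1}}(\boldsymbol\mu^{j})\ge\widetilde g_{\boldsymbol\mu^{j-1}}(\boldsymbol\mu^{j-1})=g(\boldsymbol\mu^{j-1})$; since $g$ is bounded above by the optimal value of \eqref{opt:ronly} (weak duality), $\{g(\boldsymbol\mu^j)\}$ converges, and since the superlevel sets of $g$ are bounded, $\{\boldsymbol\mu^j\}$ lies in a compact set. Invoking the convergence result of \cite{razaviyayn2013unified} for a BSUM iteration with a first-order-consistent surrogate and unique per-block minimizers, every limit point of $\{\boldsymbol\mu^j\}$ is a stationary point of $\max_{\boldsymbol\mu\ge\mathbf 0}g(\boldsymbol\mu)$ — hence, $g$ being concave, a global dual optimizer $\boldsymbol\mu^\star$ — and the whole sequence converges to $\boldsymbol\mu^\star$. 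By Proposition~2, once $\boldsymbol\mu^j\to\boldsymbol\mu^\star$ the flow values returned by Algorithm~\ref{al:parallel} become common across the links of each path and equal $\mathbf r^\star=\arg\min_{\mathbf r\ge\mathbf 0}L_c(\mathbf r,\boldsymbol\mu^\star,\boldsymbol\phi^\star)$, which is unique by strict convexity of $\psi$; and $(\mathbf r^\star,\boldsymbol\mu^\star,\boldsymbol\phi^\star)$ satisfies the stationarity relation obtained by summing \eqref{eq:findr} over the links of each path, together with primal feasibility \eqref{eq:findr11}, complementary slackness \eqref{eq:findr1}--\eqref{eq:findr2} (using $\alpha_{k,l}^{p}>0$ on the retained links), and dual feasibility — i.e., the KKT system of \eqref{opt:ronly}. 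Since \eqref{opt:ronly} is convex with affine constraints, these conditions are sufficient for global optimality, so $\mathbf r^\star$ is the optimal primal solution and $(\boldsymbol\mu^\star,\boldsymbol\phi^\star)$ an optimal dual solution.

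I expect the main obstacle to be twofold. First is the bookkeeping of degenerate components inside the surrogate argument: the normalization \eqref{eq:alpha} is undefined for a path all of whose links carry zero multiplier, and when $r_k^p=0$ (so $\phi_k^p>0$) the inner minima have kinks, so one must work on the reduced network obtained by deleting the links with $\mu_l^j=0$ — arguing, as the text already observes, that their KKT blocks hold identically and their deletion does not change the optimal value — and use one-sided directional derivatives where the objective is nonsmooth. The second and more essential difficulty is upgrading ``every limit point is optimal'' to convergence of the \emph{entire} sequence $\{\boldsymbol\mu^j\}$: this requires checking that the precise hypotheses of the BSUM theorem in \cite{razaviyayn2013unified} (joint continuity of the surrogate, compactness of the iterate sequence, uniqueness of the surrogate maximizer) are genuinely met in this dual reformulation, and, if the finite stopping rule $\norm{\boldsymbol\mu^{j}-\boldsymbol\mu^{j-1}}_2<\epsilon$ is in force, passing to the limit $\epsilon\to 0$.
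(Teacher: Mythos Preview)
Your proposal is correct and follows essentially the same route as the paper: both recast Algorithm~\ref{al:parallel} as a successive lower-bound maximization (BSUM) on the concave dual, verify the four hypotheses of \cite[Assumption~2]{razaviyayn2013unified} --- minorization \eqref{eq:uper}, tangency, first-order consistency via the envelope theorem, and continuity --- then invoke \cite[Theorem~2]{razaviyayn2013unified} and wrap up with strong duality and the KKT identification from Proposition~2. Your treatment is a bit more careful in two places the paper handles only informally: you explicitly argue compactness of the dual iterates via boundedness of the superlevel sets of $g$, and you flag the degeneracy when all $\mu_l$ on a path vanish (the paper simply drops those links once $\mu_l^j=0$ without revisiting the surrogate's well-definedness); these are genuine refinements but do not change the underlying argument.
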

\begin{proof}
	Notice that based on the definition of $\alpha_{k,l}^{p,j}$, given identical feasible variables $r_k^p$, $\hat{\mu}_l$ and $\hat{\boldsymbol{\phi}_l}$  to both Lagrangian functions in \eqref{eq:lag} and \eqref{eq:decomlag}, from \eqref{eq:equality}, we have $\sum_{l\in \mathcal{L}} L_l(\mathbf{r}_l,\hat{\mu}_l,\hat{\boldsymbol{\phi}}_l,\boldsymbol{\mu}^{j-1})=L_c(\mathbf{r},\hat{\boldsymbol{\mu}},\hat{\boldsymbol{\phi}})$. First, we show that $\sum_{l\in \mathcal{L}}\min _{\mathbf{r}_l}L_l(\mathbf{r}_l,\hat{\mu}_l,\hat{\boldsymbol{\phi}}_l,\boldsymbol{\mu}^{j-1})$ is a lower-bound for $\min _{\mathbf{r}}L_c(\mathbf{r},\hat{\boldsymbol{\mu}},\hat{\boldsymbol{\phi}})$.  Since the minimum of $L_l(\mathbf{r}_l,\hat{\mu}_l,\hat{\boldsymbol{\phi}}_l,\boldsymbol{\mu}^{j-1})$ is less than or equal to the other values of $L_l(\mathbf{r}_l,\hat{\mu}_l,\hat{\boldsymbol{\phi}}_l,\boldsymbol{\mu}^{j-1})$, we have 
		\begin{align}
		\min _{\mathbf{r}_l}L_l(\mathbf{r}_l,\hat{\mu}_l,\hat{\boldsymbol{\phi}}_l,\boldsymbol{\mu}^{j-1})\leq L_l(\mathbf{r}_l,\hat{\mu}_l,\hat{\boldsymbol{\phi}}_l,\boldsymbol{\mu}^{j-1}).\nonumber
		\end{align}
		Thus, we obtain
		 \begin{align}
		&\sum_{l\in \mathcal{L}}\min _{\mathbf{r}_l}L_l(\mathbf{r}_l,\hat{\mu}_l,\hat{\boldsymbol{\phi}}_l,\boldsymbol{\mu}^{j-1})\nonumber\\
		&\leq\sum_{l\in \mathcal{L}} L_l(\mathbf{r}_l,\hat{\mu}_l,\hat{\boldsymbol{\phi}}_l,\boldsymbol{\mu}^{j-1})=L_c(\mathbf{r},\hat{\boldsymbol{\mu}},\hat{\boldsymbol{\phi}}),\nonumber
		\end{align} 
		where the  equality is due to \eqref{eq:equality}. In $L_c(\mathbf{r},\hat{\boldsymbol{\mu}},\hat{\boldsymbol{\phi}})$, we choose $\mathbf{r}$ to be the minimizer of $L_c(\mathbf{r},\hat{\boldsymbol{\mu}},\hat{\boldsymbol{\phi}})$. Thus, we obtain 
		\begin{align}
		&\sum_{l\in \mathcal{L}}\min_{\mathbf{r}_l}L_l(\mathbf{r}_l,\hat{\mu}_l,\hat{\boldsymbol{\phi}}_l,\boldsymbol{\mu}^{j-1})\leq \min _{\mathbf{r}}L_c(\mathbf{r},\hat{\boldsymbol{\mu}},\hat{\boldsymbol{\phi}}).\label{eq:uper}
		\end{align}
		From \eqref{eq:uper}, we observe that solving the problem in \eqref{opt:dual2} iteratively is  a successive lower-bound maximization (upper-bound minimization if we rewrite problems \eqref{opt:dual1} and \eqref{opt:dual2} as minimizations).

	We justify the claim that the primal and dual solutions obtained from solving \eqref{opt:dual2} successively converge to the primal and dual solutions of \eqref{opt:ronly}. We build our proof based on the convergence theory for BSUM given in \cite{razaviyayn2013unified}. 
		We show, in the same order given in the Appendix, that the lower-bound satisfies all four convergence conditions given in \cite[Assumption 2]{razaviyayn2013unified}:
	\begin{enumerate}
		\item At feasible points $\hat{\boldsymbol{\mu}}\geq \mathbf{0}$ and $\hat{\boldsymbol{\phi}}\geq \mathbf{0}$, we show that $\min _{\mathbf{r}}L_c(\mathbf{r},\hat{\boldsymbol{\mu}},\hat{\boldsymbol{\phi}})=\sum_{l\in \mathcal{L}}\min _{\mathbf{r}_l}L_l(\mathbf{r}_l,\hat{\mu}_l,\hat{\boldsymbol{\phi}}_l,\hat{\boldsymbol{\mu}})$. 
			From KKT conditions for each subproblem, we obtain
			\begin{align}
			&\frac{L_l(\mathbf{r}_l,\hat{\mu}_l,\hat{\boldsymbol{\phi}}_l,\hat{\boldsymbol{\mu}})}{\partial r_k^p}=\frac{\hat{\mu}_l}{\sum_{l'\in \mathcal{L}_k^p}\hat{\mu}_{l'}}\frac{\partial \psi_k^p(r_k^p)}{\partial r_k^p}+\hat{\mu}_l\nonumber\\
			&-\frac{\hat{\mu}_l}{\sum_{l'\in \mathcal{L}_k^p}\hat{\mu}_{l'}}\hat{\phi}_k^{p}=0.\label{eq:mini1}
			\end{align}
			Assuming $\hat{\mu}_{l}>0$, after multiplication by $\frac{\sum_{l'\in \mathcal{L}_k^p}\hat{\mu}_{l'}}{\hat{\mu}_l}$, we obtain
			\begin{align}
			\frac{\partial \psi_k^p(r_k^p)}{\partial r_k^p}+\sum_{l\in \mathcal{L}_k^p}\hat{\mu}_l-\hat{\phi}_k^{p}=\frac{\partial L_c(\mathbf{r},\hat{\boldsymbol{\mu}},\hat{\boldsymbol{\phi}})}{\partial r_k^p}=0.\label{eq:mini2}
			\end{align}
			When $\psi(\mathbf{r})$ is strictly convex, there is a unique minimizer for each $L_c(\mathbf{r},\hat{\boldsymbol{\mu}},\hat{\boldsymbol{\phi}})$ and $L_l(\mathbf{r}_l,\hat{\mu}_l,\hat{\boldsymbol{\phi}}_l,\hat{\boldsymbol{\mu}})$. We observe from \eqref{eq:mini1} and \eqref{eq:mini2} that, at feasible points $\hat{\boldsymbol{\mu}}\geq \mathbf{0}$ and $\hat{\boldsymbol{\phi}}\geq \mathbf{0}$, the minimizer of $L_c(\mathbf{r},\hat{\boldsymbol{\mu}},\hat{\boldsymbol{\phi}})$ is equal to that of $L_l(\mathbf{r}_l,\hat{\mu}_l,\hat{\boldsymbol{\phi}}_l,\hat{\boldsymbol{\mu}})$. From \eqref{eq:equality}, applying identical variables $r_k^p$, $\hat{\mu}_l$ and $\hat{\phi}_l$ to both $L_c(\mathbf{r},\hat{\boldsymbol{\mu}},\hat{\boldsymbol{\phi}})$ and $L_l(\mathbf{r}_l,\hat{\mu}_l,\hat{\boldsymbol{\phi}}_l,\hat{\boldsymbol{\mu}})$, we have $L_c(\mathbf{r},\hat{\boldsymbol{\mu}},\hat{\boldsymbol{\phi}},\hat{\boldsymbol{\mu}})=\sum_{l\in \mathcal{L}}L_l(\mathbf{r}_l,\hat{\mu}_l,\hat{\boldsymbol{\phi}}_l,\hat{\boldsymbol{\mu}})$. We choose  each $r_k^p$ to be the minimizer, and we find $\min _{\mathbf{r}}L_c(\mathbf{r},\hat{\boldsymbol{\mu}},\hat{\boldsymbol{\phi}},\hat{\boldsymbol{\mu}})=\sum_{l\in \mathcal{L}}\min _{\mathbf{r}_l}L_l(\mathbf{r}_l,\hat{\mu}_l,\hat{\boldsymbol{\phi}}_l,\hat{\boldsymbol{\mu}})$.
		
		\item From \eqref{eq:uper}, we observe that $\sum_{l\in \mathcal{L}}\min _{\mathbf{r}_l}L_l(\mathbf{r}_l,\hat{\mu}_l,\hat{\boldsymbol{\phi}}_l,\boldsymbol{\mu}^{j-1})$ is a lower-bound.
		
		\item We deploy \cite[Proposition 7.1.1]{bertsekas1999nonlinear} to find the derivative of $\min _{\mathbf{r}_l}L_l(\mathbf{r}_l,\hat{\mu}_l,\hat{\boldsymbol{\phi}}_l,\hat{\boldsymbol{\mu}})$ with respect to $\mu_l$. There are three satisfied conditions that ensure the existence of the derivative: a) the feasible set of \eqref{opt:perlink} is compact; b) $L_l(\mathbf{r}_l,\hat{\mu}_l,\hat{\boldsymbol{\phi}}_l,\hat{\boldsymbol{\mu}})$ is continuous in $\mu_l$; and c)
			for each $\hat{\mu}_l$, the equation $\partial L_l(\mathbf{r}_l,\hat{\mu}_l,\hat{\boldsymbol{\phi}}_l,\hat{\boldsymbol{\mu}})/\partial r_k^p=0$ has a unique solution for $r_k^p$ due to the strict convexity of $\psi(\mathbf{r})$. Given identical $\hat{\mu}_l$ and $\hat{\phi}_l$ to both Lagrangian functions \eqref{eq:lag} and \eqref{eq:decomlag}, the derivative of $\sum_{l\in \mathcal{L}}\min _{\mathbf{r}_l}L_l(\mathbf{r}_l,\hat{\mu}_l,\hat{\boldsymbol{\phi}}_l,\hat{\boldsymbol{\mu}})$ with respect to $\mu_l$ is
			\begin{align}
			&\sum_{k=1}^{K}\sum_{p:\{p\in \mathcal{P}_k,l\in\mathcal{L}_k^p\}}\hspace{-.55cm}r_k^p - C_l,\nonumber\\
			& \hspace{2.1cm}\text{where}\hspace{.1cm} r_k^p=\arg\min_{r_k^p}L_l(\mathbf{r}_l,\hat{\mu}_l,\hat{\boldsymbol{\phi}}_l,\hat{\boldsymbol{\mu}}).\nonumber
			\end{align} 
			The derivative of $\min _{\mathbf{r}}L_c(\mathbf{r},\hat{\boldsymbol{\mu}},\hat{\boldsymbol{\phi}})$ with respect to $\mu_l$ is
			\begin{align}
			\sum_{k=1}^{K}\sum_{p:\{p\in \mathcal{P}_k,l\in\mathcal{L}_k^p\}}\hspace{-.65cm}r_k^p - C_l, \hspace{.1cm}\text{where}\hspace{.1cm} r_k^p=\arg\min_{r_k^p}L_c(\mathbf{r},\hat{\boldsymbol{\mu}},\hat{\boldsymbol{\phi}}).\nonumber
			\end{align}  As the minimizers of $L_c(\mathbf{r},\hat{\boldsymbol{\mu}},\hat{\boldsymbol{\phi}})$ and $L_l(\mathbf{r}_l,\hat{\mu}_l,\hat{\boldsymbol{\phi}}_l,\hat{\boldsymbol{\mu}})$ are equal at point $\hat{\boldsymbol{\mu}}$ due to 	\eqref{eq:mini1} and \eqref{eq:mini2}, we observe that both above derivatives are equal.
		
		\item $\sum_{l\in \mathcal{L}}\min _{\mathbf{r}_l}L_l(\mathbf{r}_l,\hat{\mu}_l,\hat{\boldsymbol{\phi}}_l,\boldsymbol{\mu}^{j-1})$ is a piecewise linear function of $\hat{\boldsymbol{\mu}}$, and thus, it is a  continuous function of $\hat{\boldsymbol{\mu}}$. 
	\end{enumerate}
\begin{algorithm}
	0.	\textbf{Initialization} Choose a feasible vector $\mathbf{r}^0$, $m=0$\;
	\Repeat{variables in $\mathbf{r}^m$ converge}{
		1. Find  the upper-bound \eqref{eq:lip} using $\mathbf{r}^m$\;
		2. Apply Algorithm \ref{al:parallel}  to find $\mathbf{r}$\;
		3. $m=m+1$ and $\mathbf{r}^m=\mathbf{r}$\;
	}     
	\caption{Multi-path routing algorithm for non-separable cost functions}
	\label{al:upper}
\end{algorithm}
	Building on the above arguments, Algorithm \ref{al:parallel} is a block successive lower-bound maximization method, which satisfies all four convergence conditions given in \cite[Assumption 2]{razaviyayn2013unified}. Algorithm \ref{al:parallel} converges to the global optimal solution of the concave problem \eqref{opt:dual1} \cite[Theorem 2]{razaviyayn2013unified}, which has an identical objective function to \eqref{opt:ronly} at the optimal point as a result of strong duality \cite[p. 226--p. 227]{boyd2004convex}. Once Algorithm \ref{al:parallel} converges, $\boldsymbol{\mu}^j$ and $r_k^p=\arg\min_{r_k^p} L_l(\mathbf{r}_l,\mu_l^j,\boldsymbol{\phi}_l^j,\boldsymbol{\mu}^{j-1})$ by Algorithm \ref{al:parallel} satisfy \eqref{eq:findr}--\eqref{eq:findr2}. The KKT conditions for \eqref{opt:ronly} are \eqref{eq:findr11}--\eqref{eq:findr2} in addition to $\frac{\partial L_c(\mathbf{r},\boldsymbol{\mu}^j,\boldsymbol{\phi}^j)}{\partial r_k^p}=0$. Due to \eqref{eq:cons1} and \eqref{eq:cons2}, when Algorithm \ref{al:parallel} converges, minimizers of $L_l(\mathbf{r}_l,\mu_l^j,\boldsymbol{\phi}_l^j,\boldsymbol{\mu}^{j-1})$ and $L_c(\mathbf{r},\boldsymbol{\mu}^j,\boldsymbol{\phi}^j)$ are identical, and thus, \eqref{eq:findr} ensures $\frac{\partial L_c(\mathbf{r},\boldsymbol{\mu}^j,\boldsymbol{\phi}^j)}{\partial r_k^p}=0$.	Hence, the primal and dual variables obtained by Algorithm \ref{al:parallel} satisfy the KKT conditions for \eqref{opt:ronly}. 
\end{proof}

\begin{remark}\label{rem:nonsep}
	If the cost function is convex and has a gradient that is Lipschitz continuous, but the function is not separable in $r_k^p$, i.e., $\psi(\mathbf{r})$ cannot be written as $\psi(\mathbf{r})=\sum_{k=1}^{K}\sum_{p\in \mathcal{P}_k}\psi_k^p(r_k^p)$, we use the quadratic upper-bound given in \cite[eq. (12)]{7366709}, which is separable in variables. For an arbitrary convex cost function with a Lipschitz continuous gradient like $\psi(\mathbf{r})$, we have the following upper-bound:
	\begin{align}
	\psi(\mathbf{\mathbf{r}})\leq \psi(\mathbf{r}^m)+\nabla \psi(\mathbf{r}^m)^T(\mathbf{r}-\mathbf{r}^m)+\frac{\gamma}{2}\norm{\mathbf{r}-\mathbf{r}^m}_2^2,\label{eq:lip}
	\end{align}
	where $\gamma$ is the Lipschitz constant, and $\mathbf{r}^m=\{r_k^{p,m}\}_{p\in\mathcal{P}_k,k=1:K}$ is the $m^{\text{th}}$ iterate in the successive upper-bound minimization. We start from an initial point $\mathbf{r}^0$ in the feasible set and find the upper-bound \eqref{eq:lip}. Then, we apply Algorithm \ref{al:parallel} to solve the problem with the upper-bound \eqref{eq:lip} to the global optimal solution in a parallel fashion. 
	
	When the upper-bound \eqref{eq:lip} is substituted for the cost function, the first KKT condition is
	\begin{align}
	&\frac{\partial L_l(\mathbf{r}_l,\mu_l,\boldsymbol{\phi}_l,\boldsymbol{\mu}^{j-1})}{\partial r_k^p}=\nonumber\\
	&\alpha_{k,l}^{p,j}\left(\frac{\partial \psi_k^p(r_k^p)}{\partial r_k^p}\mid_{r_k^P=r_k^{p,m}}+\gamma(r_k^p-r_k^{p,m})\right)+\mu_l-\alpha_{k,l}^{p,j}\phi_k^{p}=0,
	\end{align}instead of \eqref{eq:findr}.
	Once the problem with the upper-bound \eqref{eq:lip} is solved, we use the obtained solution to update $\mathbf{r}^m$ in the upper-bound \eqref{eq:lip}. We repeat this approach until $\mathbf{r}^m$ converges. We summarize this approach in Algorithm \ref{al:upper}.
	
	In iteration $m$, the value of the upper-bound \eqref{eq:lip} and its gradient are $\psi(\mathbf{r}^m)$ and $\nabla \psi(\mathbf{r}^m)$, respectively, which are equal to the value and the gradient of the non-separable cost function $\psi(\mathbf{r})$. Furthermore, the upper-bound in \eqref{eq:lip} is continuous, and thus, all four convergence conditions given in \cite[Assumption 2]{razaviyayn2013unified} and listed in the Appendix are satisfied. Due to \cite[Theorem 2]{razaviyayn2013unified} and the convexity of the non-separable cost function, the obtained solution by Algorithm \ref{al:upper}, which implements BSUM, is identical to the solution of the original problem with the non-separable cost function.
\end{remark}

\begin{remark}\label{co:2}
	When the cost function is convex and separable, but not strictly convex, we add a proximal term to the cost function and make it locally strongly convex as follows:
	\allowdisplaybreaks
	\begin{equation}
	\begin{aligned}
	& \underset{\mathbf{r}}{\min}
	& & \sum_{k=1}^{K}\sum_{p\in \mathcal{P}_k}\psi_k^p(r_k^p)+\frac{\kappa}{2}\norm{\mathbf{r}-\mathbf{r}^m}_2^2\label{opt:convex}\\
	& \text{s.t.}
	& & \eqref{eq:linkcap}, r_k \geq 0,\hspace{0.5cm} p\in \mathcal{P}_k,\forall k,
	\end{aligned}
	\end{equation} 
	where $\kappa$ is a small positive constant. We use Algorithm \ref{al:parallel} to solve the above problem when we use the following equation:
	\begin{align}
	&\frac{\partial L_l(\mathbf{r}_l,\mu_l,\boldsymbol{\phi}_l,\boldsymbol{\mu}^{j-1})}{\partial r_k^p}=\alpha_{k,l}^{p,j}\frac{\partial \psi_k^p(r_k^p)}{\partial r_k^p}+\alpha_{k,l}^{p,j}\kappa(r_k^p-r_k^{p,m})\nonumber\\
	&+\mu_l-\alpha_{k,l}^{p,j}\phi_k^{p}=0,\nonumber
	\end{align}
	instead of \eqref{eq:findr} to find $r_k^p$, where $r_k^{p,m}$ is the value of $r_k^p$ in the $m^{\text{\text{th}}}$ iteration of solving \eqref{opt:convex}. We successively solve \eqref{opt:convex} with Algorithm \ref{al:parallel} and update $\mathbf{r}^m$ until $\mathbf{r}^m$ converges. Similar to Remark 2, one can show that the cost function with the proximal term in \eqref{opt:convex} satisfies the four convergence conditions in \cite[Assumption 2]{razaviyayn2013unified} and the global minimum is obtained after successive minimizations, since each local minimum is also global for a convex function.
\end{remark}

\section{Simultaneous Resource Reservations in the Backhaul and RAN}
In this section, we study the joint link capacity and AP transmission resource reservation based on the user demand and downlink statistics. Prior to the observation of user demands, based on the formulated model in \eqref{opt:first}, the network operator finds the optimal amount of reserved resources in the backhaul and APs such that neither the link capacity nor AP capacity is exceeded.

\subsection{Resource Reservation in the Backhaul}
Let us drop the equality constraint \eqref{eq:split} from \eqref{opt:first} and substitute $\sum_{p \in \mathcal{P}_k} r_k^p$ for $r_k$. Then, we have
\begin{equation}
\begin{aligned}
& \underset{\mathbf{r,t}}{\min}
& & \sum_{k=1}^{K}\Big[-\mathbb{E}[\min(\sum_{p \in \mathcal{P}_k} r_k^p,d_k)]\\
&&&\hspace{1cm}+\theta_k\sum_{p \in \mathcal{P}_k}\int_{0}^{r_k^p} z_k^p(v_k^p,t_{k}^{p})\:(r_k^p-v_k^p)dv_k^p\Big]\label{eq:sim}\\
& \text{s.t.}
& & \eqref{eq:linkcap},\eqref{eq:nodecap}, r_k^p, t_{k}^{p} \geq 0,\hspace{0.5cm} p\in \mathcal{P}_k,\forall k.
\end{aligned}
\end{equation}
We solve the above problem using the proposed BCD algorithm. With the fixed $\mathbf{t}$, we minimize \eqref{eq:sim} with respect to $\mathbf{r}$ and update it. With updated $\mathbf{r}$, we minimize \eqref{eq:sim} with respect to $\mathbf{t}$ and update it. We underline the iterates of the BCD algorithm. In the $i+1^{\text{th}}$ iteration of the BCD algorithm, fixing $\underline{\mathbf{t}}^{i}$, we minimize with respect to $\mathbf{r}$. Then, the minimization problem in \eqref{eq:sim} reduces to the following \textit{convex} one:  
\begin{equation}
\begin{aligned}
& \underset{\mathbf{r}}{\min}
& & \sum_{k=1}^{K}\Big[-\mathbb{E}[\min(\sum_{p \in \mathcal{P}_k} r_k^p,d_k)]\nonumber\\
&&&\hspace{1cm}+\theta_k\sum_{p \in \mathcal{P}_k}\int_{0}^{r_k^p} z_k^p(v_k^p,\underline{t}_{k}^{p,i})\:(r_k^p-v_k^p)dv_k^p\Big]\\
& \text{s.t.}
& & \eqref{eq:linkcap}, r_k^p \geq 0,\hspace{0.5cm} p\in \mathcal{P}_k,\forall k.\label{opt:app}
\end{aligned}
\end{equation}
It is observed that although the expected outage is separable in $r_k^p$, variables are coupled in the first term of the objective function. Therefore, we substitute the global quadratic upper-bound given in \eqref{eq:lip} for the expected supportable traffic demand. First, let us calculate the Lipschitz constant for the gradient of  $-\mathbb{E}[\min(\sum_{p \in \mathcal{P}_k} r_k^p,d_k)]$. 
The second derivative of $-\mathbb{E}[\min(\sum_{p \in \mathcal{P}_k} r_k^p,d_k)]$ is
\begin{align}
-\frac{\partial^2 \mathbb{E}[\min(\sum_{p \in \mathcal{P}_k} r_k^p,d_k)]}{\partial r_k^p \partial r_k^{p'}}=f_k(\sum_{p \in \mathcal{P}_k} r_k^p).\nonumber
\end{align}
The Hessian matrix for $-\mathbb{E}[\min(\sum_{p \in \mathcal{P}_k} r_k^p,d_k)]$ is $|\mathcal{P}_k|\times|\mathcal{P}_k|$ dimensional, where all entries are $f_k(\sum_{p \in \mathcal{P}_k} r_k^p)$.
The eigenvalues of the Hessian matrix are all zeros except one of them, which is $|\mathcal{P}_k|f_k(\sum_{p \in \mathcal{P}_k} r_k^p)$. Therefore, the Lipschitz constant is $|\mathcal{P}_k|$. We now place the Lipschitz constant in \eqref{eq:lip} and find the upper-bound which is separable in $r_k^p$  as follows:
\begin{align}
&-\mathbb{E}[\min(\sum_{p \in \mathcal{P}_k} r_k^p,d_k)]\leq -\mathbb{E}[\min(\sum_{p \in \mathcal{P}_k}r_k^{p,m},d_k)]\nonumber\\&+(F_k(\sum_{p \in \mathcal{P}_k} r_k^{p,m})-1)(\sum_{p \in \mathcal{P}_k} r_k^p-\sum_{p \in \mathcal{P}_k} r_k^{p,m})\nonumber\\&+\frac{|\mathcal{P}_k|}{2}\sum_{p \in \mathcal{P}_k}(r_k^{p,m}- r_k^p)^2,\label{eq:updemand}
\end{align}
where $r_k^{p,m}$ is the $m^{\text{\text{th}}}$ iterate. We substitute upper-bound \eqref{eq:updemand} for the expected supportable demand and the optimization problem in each iteration becomes:
\begin{align}
& \underset{\mathbf{r}}{\min}
& & \sum_{k=1}^{K}\Bigg[(F_k(\sum_{p \in \mathcal{P}_k} r_k^{p,m})-1)(\sum_{p \in \mathcal{P}_k} r_k^p-\sum_{p \in \mathcal{P}_k} r_k^{p,m})\;\label{opt:upp}\\& & &+\frac{|\mathcal{P}_k|}{2}\sum_{p \in \mathcal{P}_k}(r_k^p-r_k^{p,m})^2\nonumber\\& & &+\theta_k\sum_{p \in \mathcal{P}_k}\int_{0}^{r_k^p} \hspace{-.1cm}z_k^p(v_k^p,\underline{t}_{k}^{p,i})\:(r_k^p-v_k^p)dv_k^p\Bigg]\nonumber\\
& \text{s.t.}
& & \eqref{eq:linkcap}, r_k^p \geq 0,\hspace{0.5cm} p\in \mathcal{P}_k,\forall k.\nonumber
\end{align}
We leverage Algorithm \ref{al:upper} to solve \eqref{opt:app} in a parallel fashion. In each iteration of Algorithm \ref{al:upper}, Algorithm \ref{al:parallel} is called to solve the problem in \eqref{opt:upp}. Moreover, Algorithm \ref{al:bi} is called within Algorithm \ref{al:parallel} and it needs to solve $\frac{\partial L_l(\mathbf{r}_l,\mu_l,\boldsymbol{\phi}_l,\boldsymbol{\mu}^{j-1})}{\partial r_k^p}=0$. We rewrite \eqref{eq:findr} for the above optimization problem in the $j^{\text{th}}$ iteration of Algorithm \ref{al:parallel} as follows:
\begin{align}
&\frac{\partial L_l(\mathbf{r}_l,\mu_l,\boldsymbol{\phi}_l,\boldsymbol{\mu}^{j-1})}{\partial r_k^p}=\alpha_{k,l}^{p,j}(F_k(\sum_{p \in \mathcal{P}_k} r_k^{p,m})-1)+\mu_l\nonumber\\
&+\alpha_{k,l}^{p,j}|\mathcal{P}_k|( r_k^p- r_k^{p,m})+\theta_k\:\alpha_{k,l}^{p,j} Z_k^p(r_k^p,\underline{t}_k^{p,i})-\alpha_{k,l}^{p,j}\phi_k^p=0.\nonumber
\end{align}
We observe that for each $\mu_l$, we are able to obtain $r_k^p$ numerically using  $r_k^{p,m}$, independent of the other variables. The solution obtained by Algorithm \ref{al:upper} is unique due to the strong convexity of \eqref{opt:upp} and is global minima as explained in Remark \ref{rem:nonsep}. After Algorithm \ref{al:upper} converges, we set $\underline{\mathbf{r}}^{i+1}=\mathbf{r}^m$.
\begin{remark}
	Suppose that the number of paths that are available to user $k$ and share downlink $w\in \mathcal{W}_k$ is $\varphi_k^w$.	When multiple paths for serving a user share one downlink, we substitute the quadratic upper-bound \eqref{eq:lip} for the outage \eqref{eq:mul} as follows:
	\begin{align}
	&\eqref{eq:mul}\leq	
	\sum_{w \in \mathcal{W}_k}\int_{0}^{\sum_{p:\{p\in \mathcal{P}_k,w\in p\}}r_k^{p,m}} \hspace{-0.8cm}z_k^w(v_k^p,\underline{t}_{k}^{w,i})(\hspace{-.5cm}\sum_{p:\{p\in \mathcal{P}_k,w\in p\}}\hspace{-.5cm}r_k^{p,m}-v_k^p)dv_k^p\nonumber\\
	&+\sum_{w \in \mathcal{W}_k}Z_k^w(\hspace{-.1cm}\sum_{p:\{p\in \mathcal{P}_k,w\in p\}}\hspace{-.1cm}r_k^{p,m},\underline{t}_k^{w,i})(\hspace{-.1cm}\sum_{p:\{p\in \mathcal{P}_k,w\in p\}}\hspace{-.1cm}(r_k^p-r_k^{p,m}))\nonumber\\
	&+\sum_{w \in \mathcal{W}_k}\sum_{p:\{p\in \mathcal{P}_k,w\in p\}}\frac{\varphi_k^w}{2}( r_k^{p}- r_k^{p,m})^2,\label{eq:uppermulti}
	\end{align}
	where $\varphi_k^w$ is the Lipschitz constant. In this case, the objective function of \eqref{opt:upp} is obtained from adding \eqref{eq:updemand} and the RHS of \eqref{eq:uppermulti}. We rewrite \eqref{eq:findr} in the $j^{\text{th}}$ iteration of Algorithm \ref{al:parallel} for this case as follows:
	\begin{align}
	&\frac{\partial L_l(\mathbf{r}_l,\mu_l,\boldsymbol{\phi}_l,\boldsymbol{\mu}^{j-1})}{\partial r_k^p}=\alpha_{k,l}^{p,j}(F_k(\sum_{p \in \mathcal{P}_k} r_k^{p,m})-1)\nonumber\\
	&+\alpha_{k,l}^{p,j}|\mathcal{P}_k|( r_k^p- r_k^{p,m})+\theta_k\:\alpha_{k,l}^{p,j} Z_k^w(\sum_{p:\{p\in \mathcal{P}_k,w \in p\}}r_k^{p,m},\underline{t}_{k}^{w,i})\nonumber\\
	&+\theta_k\varphi_k^w\alpha_{k,l}^{p,j}( r_k^{p}- r_k^{p,m})+\mu_l-\alpha_{k,l}^{p,j}\phi_k^p=0.\nonumber
	\end{align}
\end{remark}

\subsection{Resource Reservation in RAN}
When we minimize \eqref{eq:sim} with respect to $\mathbf{t}$ in the BCD algorithm, we use $\underline{\mathbf{r}}^{i+1}$ obtained by Algorithm \ref{al:upper}. We propose a dual approach to minimize with respect to $\mathbf{t}$. The objective function of \eqref{eq:sim} is separable in $t_k^p$. We are able to parallelize the algorithm across APs since each AP has a separate transmission resource capacity constraint. However, the problem in \eqref{eq:sim} is not necessarily convex in $t_k^p$ for an arbitrary $z_k^p(v_k^p,t_k^p)$. To tackle the potential non-convexity of the problem, we use the BSUM method and convexify the problem locally. We iteratively solve a sequence of convex approximations. Suppose that for each outage term in the objective function of \eqref{eq:sim}, we add a proximal term $\frac{\zeta_k^{p,j}}{2}\norm{t_k^p-t_k^{p,j}}_2^2$, $\zeta_k^{p,j} >0$, to make it locally strongly convex. In the proximal term, $t_k^{p,j}$ is the value of $t_k^{p}$ in the $j^{\text{\text{th}}}$ iteration of successively minimizing \eqref{eq:sim} with respect to $\mathbf{t}$. The objective function with the proximal terms
is an upper-bound of the original objective function. We find the Lagrangian for \eqref{eq:sim} with respect to $\mathbf{t}$ with proximal terms in the objective function as follows:
\begin{align}
&L_t(\mathbf{t},\boldsymbol{\lambda},\boldsymbol{\beta})=\sum_{k=1}^{K}\sum_{p \in \mathcal{P}_k}\Bigg(\theta_k\int_{0}^{\underline{r}_k^{p,i+1}}\hspace{-.3cm} z_k^p(v_k^p,t_{k}^{p})\:(\underline{r}_k^{p,i+1}-v_k^p)dv_k^p\nonumber\\
&+\frac{\theta_k\zeta_k^{p,j}}{2}\norm{t_k^p-t_k^{p,j}}_2^2\Bigg)+\sum_{b\in \mathcal{B}}\lambda_b(\sum_{k=1}^K\sum_{p:\{p\in \mathcal{P}_k,b \in \mathcal{U}_k^p\}}\hspace{-.4cm}t_k^p - C_b)\nonumber\\
&-\sum_{k=1}^K\sum_{p\in \mathcal{P}_k}\beta_k^p t_k^p,\nonumber
\end{align}
where $\underline{\mathbf{r}}^{i+1}$ block is kept fixed. We can decompose the above Lagrangian across APs as follows:
\begin{align}
&L_{t,b}(\mathbf{t}_b,\lambda_b,\boldsymbol{\beta}_b)=\nonumber\\
&\theta_k\sum_{k=1}^K\sum_{p:\{p\in \mathcal{P}_k,b \in \mathcal{U}_k^p\}}\Bigg(\int_{0}^{\underline{r}_k^{p,i+1}} z_k^p(v_k^p,t_{k}^{p})\:(\underline{r}_k^{p,i+1}-v_k^p)dv_k^p\nonumber\\
&+\frac{\zeta_k^{p,j}}{2}\norm{t_k^p-t_k^{p,j}}_2^2\Bigg)\nonumber\\
&+\lambda_b\:(\sum_{k=1}^K\sum_{p:\{p\in \mathcal{P}_k,b \in \mathcal{U}_k^p\}}t_k^p - C_b)-\sum_{k=1}^K\sum_{p:\{p\in \mathcal{P}_k,b \in \mathcal{U}_k^p\}}\beta_k^p t_k^p,\label{eq:lagdec}
\end{align}
where $\mathbf{t}_b=\{t_k^p\}_{p\in\mathcal{P}_k,b\in \mathcal{U}_k^p,k=1:K}$ and  $\boldsymbol{\beta}_b=\{\beta_k^p\}_{p\in\mathcal{P}_k,b\in \mathcal{U}_k^p,k=1:K} \geq \mathbf{0}$. 
To develop an algorithm to solve each subproblem with respect to $\mathbf{t}_b$, we use KKT conditions. We write the first-order optimality conditions with respect to $\mathbf{t}$ as follows:
\begin{subequations}
	\begin{align}
	&\frac{\partial L_{t,b}(\mathbf{t}_b,\lambda_b,\boldsymbol{\beta}_b)}{\partial t_k^p}=\theta_k\int_{0}^{\underline{r}_k^{p,i+1}} \frac{\partial z_k^p(v_k^p,t_{k}^{p})}{\partial t_k^p}\:(\underline{r}_k^{p,i+1}-v_k^p)dv_k^p\nonumber\\
	&+\lambda_b+\theta_k\zeta_k^{p,j}\:(t_k^p-t_k^{p,j})-\beta_k^p=0,\label{eq:findt1}\\
	&\sum_{k=1}^K\sum_{p:\{p\in \mathcal{P}_k,b \in \mathcal{U}_k^p\}}t_k^p \leq C_b,\label{eq:findt4}\\
	&\lambda_b(\sum_{k=1}^K\sum_{p:\{p\in \mathcal{P}_k,b \in \mathcal{U}_k^p\}}t_k^p -C_b)=0,\:\:\lambda_b\geq 0,\label{eq:findt2}\\
	&\beta_k^p t_k^p=0,\:\:t_k^p\geq 0,\:\:\beta_k^p\geq 0\label{eq:findt3}.
	\end{align}
\end{subequations}
From \eqref{eq:findt1}, we observe that a given dual variable $\lambda_b$, which corresponds to AP $b$, identifies the reserved resource $t_k^p$ for all downlinks created by that AP.  The proposed dual algorithm works as follows: implement a bisection search on $\lambda_b$ in the non-negative orthant and find each $t_k^p: t_k^p\geq 0$, which is associated with the AP $b$, from \eqref{eq:findt1} when $\beta_k^p=0$. Continue the bisection search until one $\lambda_b$ is obtained such that for the obtained $\lambda_b$, we have $\sum_{k=1}^K\sum_{p:\{p\in \mathcal{P}_k,b \in \mathcal{U}_k^p\}}t_k^p =C_b$. If there is no such $\lambda_b$, we set $\lambda_b=0$ and solve \eqref{eq:findt1} and \eqref{eq:findt3} without \eqref{eq:findt4}--\eqref{eq:findt2}. Once the optimized variables are obtained, we update $t_k^{p,j}$ and $j=j+1$. We repeat the same process until $\mathbf{t}^j=\{t_{k}^{p,j}\}_{p\in\mathcal{P}_k,k=1:K}$ converges.  As it is explained in Remark \ref{co:2}, after a sequence of upper-bound minimizations and updating the proximal terms in the objective function, a KKT (local stationary) solution to the original problem is obtained. If expected outage terms for downlinks are non-increasing in $\mathbf{t}$, one can show that the successive upper-bound minimization converges to the global minima with respect to $\mathbf{t}$. After $\mathbf{t}^j$ converges, we set $\underline{\mathbf{t}}^{i+1}=\mathbf{t}^j$.

\subsection{The Proposed BCD Algorithm}
To solve the problem in \eqref{eq:sim} to a KKT point, we optimize with respect to two blocks of variables, $\mathbf{r}$ and $\mathbf{t}$, alternatively with the Gauss-Seidel update style. Therefore, if we choose $\mathbf{r}$ to update first, with  $\underline{\mathbf{r}}^{i+1}$, we optimize with respect to $\mathbf{t}$, and then, we update $\underline{\mathbf{t}}^{i+1}$. We keep optimizing with respect to $\mathbf{r}$ and $\mathbf{t}$ alternatively until both blocks converge. The summary of the overall BCD approach is given in Algorithm \ref{al:bcd}.

\begin{algorithm}[t!]
	0.	\textbf{Initialization}  Feasible initializations for $\underline{\mathbf{r}}^0$ and $\underline{\mathbf{t}}^0$, $i=0$\;
	\Repeat{$\norm{\underline{\mathbf{r}}^{i}-\underline{\mathbf{r}}^{i-1}}_2^2+\norm{\underline{\mathbf{t}}^{i}-\underline{\mathbf{t}}^{i-1}}_2^2$ is small enough}{
		1. Apply Algorithm \ref{al:upper} to solve \eqref{eq:sim} and find $\underline{\mathbf{r}}^{i+1}$\;
		2. Solve \eqref{eq:sim} with respect to $\mathbf{t}$ and find $\underline{\mathbf{t}}^{i+1}$\;
		3. $i=i+1$\;
	}     
	\caption{The proposed BCD algorithm to solve \eqref{eq:sim}}
	\label{al:bcd}
\end{algorithm}
\begin{proposition}
	Algorithm \ref{al:bcd} converges to a KKT solution to \eqref{eq:sim}.
\end{proposition}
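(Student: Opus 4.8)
The plan is to reduce the statement to two ingredients: (i) the Cartesian product structure of the feasible set of \eqref{eq:sim}, which makes the KKT system of the joint problem decouple into the KKT systems of the two block subproblems; and (ii) a monotone--descent plus compactness argument, with the block solvers developed above supplying the per-block first-order certificates. Concretely, I would first observe that \eqref{eq:linkcap} and the bounds $r_k^p\ge0$ constrain only $\mathbf{r}$, while \eqref{eq:nodecap} and the bounds $t_k^p\ge0$ constrain only $\mathbf{t}$; hence the Lagrangian of \eqref{eq:sim} splits, and a feasible pair $(\mathbf{r}^\star,\mathbf{t}^\star)$ is a KKT point of \eqref{eq:sim} \emph{iff} $\mathbf{r}^\star$, with some multipliers $(\boldsymbol{\mu},\boldsymbol{\phi})$, satisfies the $\mathbf{r}$-subproblem KKT conditions at $\mathbf{t}=\mathbf{t}^\star$ \emph{and} $\mathbf{t}^\star$, with some multipliers $(\boldsymbol{\lambda},\boldsymbol{\beta})$, satisfies \eqref{eq:findt1}--\eqref{eq:findt3} at $\mathbf{r}=\mathbf{r}^\star$. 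Since all constraints are affine, KKT is a necessary condition for stationarity without any extra constraint qualification, and because the feasible region is a compact polytope and (under the standing regularity of the densities $z_k^p,f_k$) the objective of \eqref{eq:sim} is $C^1$ there, the associated multipliers stay in a bounded set.

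Next I would set up the descent argument. The feasible set is a bounded polytope (each path traverses at least one backhaul link of finite capacity $C_l$ and exactly one AP of finite capacity $C_b$), hence compact, and the objective of \eqref{eq:sim} is bounded below on it, since $-\mathbb{E}[\min(\sum_{p}r_k^p,d_k)]\ge-\sum_{l}C_l$ and every outage term is non-negative with $\theta_k\ge0$. Step~1 of Algorithm~\ref{al:bcd} replaces $\underline{\mathbf{r}}^{i}$ by a global minimizer of the \emph{convex} $\mathbf{r}$-subproblem (concavity in $\mathbf{r}$ for fixed $\mathbf{t}$ from Proposition~1, exact solution by Theorem~\ref{thr:1} together with Remark~\ref{rem:nonsep} for the non-separable demand term), so it cannot increase the objective; Step~2 runs the proximal BSUM inner loop for $\mathbf{t}$, warm-started at $\underline{\mathbf{t}}^{i}$, and because each majorizer in that loop (objective plus $\tfrac{\zeta_k^{p,j}}{2}\norm{t_k^p-t_k^{p,j}}_2^2$) equals the true objective at the current iterate and dominates it elsewhere, every inner step --- hence Step~2 as a whole --- is non-increasing. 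Therefore $\{f(\underline{\mathbf{r}}^{i},\underline{\mathbf{t}}^{i})\}$ is non-increasing and bounded below, so it converges, and by compactness $\{(\underline{\mathbf{r}}^{i},\underline{\mathbf{t}}^{i})\}$ has a feasible limit point $(\mathbf{r}^\star,\mathbf{t}^\star)$.

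Finally I would pass the per-block optimality conditions to the limit. The quadratic upper bound \eqref{eq:updemand} (modulus $|\mathcal{P}_k|$) used in the $\mathbf{r}$-update and the proximal terms used in the $\mathbf{t}$-update make each inner step a strongly convex majorization step, which yields a sufficient-decrease inequality whose summability over all iterations forces the consecutive gaps $\underline{\mathbf{r}}^{i+1}-\underline{\mathbf{r}}^{i}$, $\underline{\mathbf{t}}^{i+1}-\underline{\mathbf{t}}^{i}$ and the inner proximal gaps to tend to $\mathbf{0}$; consequently, along a subsequence converging to $(\mathbf{r}^\star,\mathbf{t}^\star)$, the ``lagged'' block in each update ($\underline{\mathbf{t}}^{i}$ in Step~1, $\underline{\mathbf{r}}^{i+1}$ in Step~2) also converges to the corresponding component of $(\mathbf{r}^\star,\mathbf{t}^\star)$. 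Since $\underline{\mathbf{r}}^{i+1}$ satisfies \eqref{eq:findr}--\eqref{eq:findr2} (via Theorem~\ref{thr:1}, with $\mathbf{t}=\underline{\mathbf{t}}^{i}$) and $\underline{\mathbf{t}}^{i+1}$ satisfies \eqref{eq:findt1}--\eqref{eq:findt3} with the $\zeta$-terms vanishing in the limit, continuity of the CDFs and integrands in \eqref{eq:sim} together with boundedness of $\boldsymbol{\mu},\boldsymbol{\phi},\boldsymbol{\lambda},\boldsymbol{\beta}$ let me extract convergent sub-subsequences of multipliers and take limits in both systems, concluding that $(\mathbf{r}^\star,\mathbf{t}^\star)$ satisfies both block KKT systems; by the decoupling of the first paragraph this is exactly a KKT point of \eqref{eq:sim}, and resetting $r_k=\sum_{p\in\mathcal{P}_k}r_k^p$ gives a KKT point of \eqref{opt:first}. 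I expect the main obstacle to be precisely this limiting step --- ensuring the iterate gaps vanish so the lagged block shares the same limit, and ensuring the multipliers returned by Algorithms~\ref{al:bi} and \ref{al:upper} and by the $\mathbf{t}$-loop stay in a compact set --- rather than the monotonicity bookkeeping, which is routine.
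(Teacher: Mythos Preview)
Your plan is sound and reaches the same conclusion, but the route differs markedly from the paper's. The paper's proof is essentially a three-line invocation of a black-box BCD convergence result: it notes that the objective of \eqref{eq:sim} is continuously differentiable, that the feasible sets for the $\mathbf{r}$- and $\mathbf{t}$-blocks are separate (Cartesian product structure), and that each block update returns a KKT solution for its subproblem; it then cites \cite[Proposition~3.7.1]{bertsekas1999nonlinear} and concludes. No monotone-descent bookkeeping, no multiplier compactness, no explicit limit passage appear in the text.

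What you do differently is unpack that black box. Your first paragraph (KKT decoupling via the product structure and affine constraints) is exactly the ``separate feasible sets'' observation the paper makes, but you state the equivalence of joint and blockwise KKT systems explicitly. Your second and third paragraphs then supply, from scratch, the ingredients that Bertsekas' proposition packages: boundedness below and monotone decrease of the objective, compactness of the feasible polytope, a sufficient-decrease inequality from the strong convexity of the majorizers (modulus $|\mathcal{P}_k|$ in the $\mathbf{r}$-step, $\zeta_k^{p,j}$ in the $\mathbf{t}$-step) to force the iterate gaps to vanish, and a subsequence-plus-continuity argument to pass the per-block KKT conditions to the limit. This buys you two things the paper's citation does not make explicit: you handle the fact that the $\mathbf{t}$-subproblem is solved only to a stationary point by BSUM (whereas Bertsekas' Proposition~3.7.1 in its usual form asks for a uniquely attained block minimum), and you track the vanishing of the proximal $\zeta$-terms in \eqref{eq:findt1} so that the limiting stationarity condition is that of the original problem rather than of a regularized one. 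The obstacle you flag --- bounded multipliers and vanishing lags --- is real but routine under the affine-constraint and compact-feasible-set setting you have already established; your plan is more careful here than the paper's one-line appeal.
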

\begin{proof}
	First, the objective function of \eqref{eq:sim} is continuously differentiable. 
	Second, feasible sets of two blocks of variables are separate in \eqref{eq:sim}. Hence, updating one block of variables does not change the other block. Third, in each iteration of Algorithm \ref{al:bcd}, a KKT solution is obtained. Therefore, according to \cite[Proposition 3.7.1]{bertsekas1999nonlinear}, the proposed (BCD) Algorithm \ref{al:bcd} converges to a KKT solution. 
\end{proof}

\section{Numerical Tests}
\label{sec:sim}
In this section, we demonstrate the performance of our proposed approach against two heuristic algorithms.
\subsection{Simulation Setup}
The considered network for evaluations is shown in Fig. \ref{fig:net}, which includes both the backhaul and radio access parts. A data center is connected to routers of the network through three gateway routers, GW $1$, GW $2$, and GW $3$. The network includes $57$ APs and $11$ network routers. APs are distributed on the X-Y plane and they are connected to each other and routers via wired links. The backhaul network has $162$ links. Wired link capacities are identical in both directions. 
Backhaul link capacities are determined as
\begin{itemize}
	\item Links between the data center and routers: $4$ Gnats/s;
	\item Links between routers: 2 Gnats/s;
	\item Links between routers and  APs: $2$ Gnats/s;
	\item 2-hop to the routers: $400$ Mnats/s;
	\item 3-hop to the routers: $320$ Mnats/s;
	\item 4-hop to the routers: $160$ Mnats/s.
\end{itemize}
The considered paths originate from the data center and are extended toward users. We consider $200$ users are distributed randomly in the same plane of APs; however, they are not shown in Fig. \ref{fig:net}. User AP associations are determined by the highest long-term received power. We consider three wireless connections, which have the highest received power, to serve each user. There are three paths for carrying data from a data center to APs. The distribution of the demand is log-normal:
\begin{align}
d_k \sim \frac{1}{d_k \sigma_k \sqrt{2\pi}}\exp(-\frac{(\ln d_k-\eta_k)^2}{2 \sigma_k^2}).\label{eq:demand}
\end{align} 
In addition, it is assumed that $\eta_k$ is realized randomly from a normal distribution for each user. The power allocations in APs are fixed. The dispensed resource in an AP is bandwidth. The channel between each user and an AP is a Rayleigh fading channel. The CDF of the wireless channel capacity, which is parameterized by the allocated bandwidth $t_k^p$, is given as follows \cite{4411539}:
\begin{align}
Z_k^p(v_k^p,t_k^p)=1-\exp(\frac{1-2^{v_k^p/t_k^p}}{\overline{\text{SNR}_k^p}}),\nonumber
\end{align}
where $\overline{\text{SNR}_k^p}$ is the average SNR. The PDF of the wireless channel capacity is
\begin{align}
z_k^p(v_k^p,t_k^p)=\frac{\ln(2)2^{v_k^p/t_k^p}\exp(\frac{1-2^{v_k^p/t_k^p}}{\overline{\text{SNR}_k^p}})}{\overline{\text{SNR}_k^p} t_k^p}.\label{eq:dist}
\end{align}
Benchmark heuristic algorithms are the single-path and the average-based approaches. In the single-path approach, each user is served through one path from a data center to a user. Moreover, the average-based algorithm only considers the mean of the user demand and the average achievable rate of a downlink. To compare algorithms, with an identical network, we measure the objective function of \eqref{opt:first}, the sum of user expected supportable rates, the aggregate expected outage of downlinks and the amount of traffic that each algorithm  can reserve for users. One datastream is associated with each user. In total, we have $600$ paths in the backhaul. We use \textsc{C} to implement algorithms. 
\begin{figure}
	\centering
	\includegraphics[width=.45\textwidth]{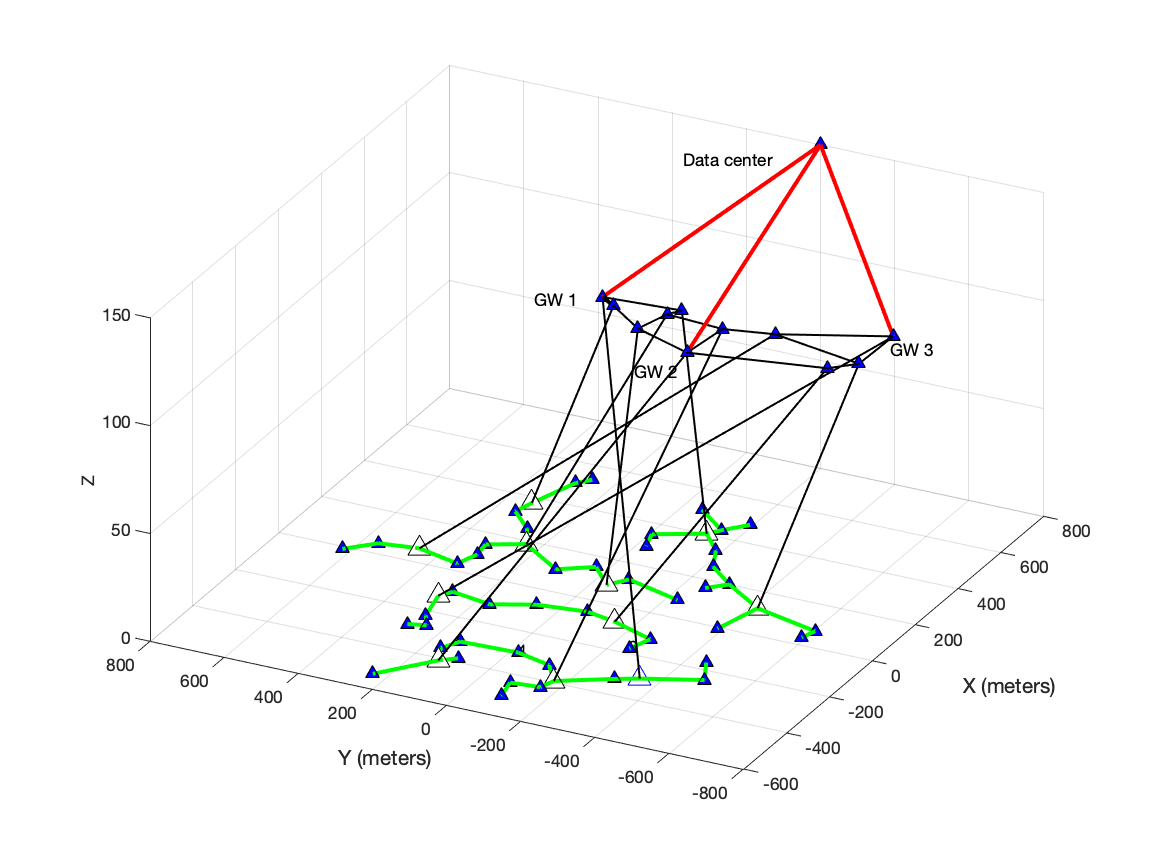}
	\caption{A wireless data network consists of $57$ APs and $11$ routers.}\label{fig:net}
\end{figure} 
\subsection{Learning Probability Density Functions}
	The optimization problem in \eqref{opt:first} takes into account PDFs of user demands and achievable rates of downlinks. When PDFs are not given, one can use a data-driven approach to learn PDFs used in \eqref{opt:first} based on collected observations. Upon the collection of user demands and achievable rates of downlinks, one can estimate the  PDFs using a recursive non-parametric estimator. In order to estimate PDFs in an online streaming fashion, one can use efficient recursive kernel estimators, such as the Wolverton and Wagner estimator \cite{wolverton1969asymptotically}. Suppose that independent random variables $X_1, X_2,\dots,X_n$ are observations that are collected from an identical PDF $\chi$ with respect to Lebesgue's measure. The estimated PDF is 
	\begin{align}
	\hat{\chi}_{n,\mathbf{h}_n}=\frac{1}{n}\sum_{k=1}^{n}\frac{1}{h_k}K(\frac{X_k-x}{h_k}),\nonumber
	\end{align}\normalsize
	where $\mathbf{h}_n=(h_1,h_2,\dots,h_n)$, $h_1>\dots>h_n$ and $K(\cdot)$ is a kernel function. The advantage of the above estimator is that it can be written in a recursive form as follows:
	\begin{align}
	\hat{\chi}_{n+1,\mathbf{h}_{n+1}}=\frac{n}{n+1}\hat{\chi}_{n,\mathbf{h}_n}+\frac{1}{(n+1)h_{n+1}}K(\frac{X_{n+1}-x}{h_{n+1}}),\nonumber
	\end{align}\normalsize
	which makes it suitable for real-time applications. The bandwidth selection in \cite{comte2019bandwidth} can be used for the above estimator. The bandwidth $h_k$ is selected in \cite{comte2019bandwidth} as $h_k=k^{-\gamma}, k\in \{1,\dots,n\}$, where $\gamma=\frac{1}{2\beta+1}$ and $\beta >0$.
	\subsection{Simulation Results}

Before demonstrating the performance of Algorithm \ref{al:bcd}, we depict the convergence of Algorithm \ref{al:parallel} in Fig. \ref{fig:conv}. The convergence of Algorithm \ref{al:parallel} for different means of the user demand is depicted in Fig. \ref{fig:conv}. It is observed that Algorithm \ref{al:parallel} has a fast convergence for the large network of Fig. \ref{fig:net} with $600$ paths. Numerical results show that the number of required iterations for Algorithm \ref{al:upper} to converge for the simulation setting described above is at most $60$. The CPU time for Algorithm \ref{al:upper} is measured and is given in Table \ref{eq:cpu}. 

\begin{figure}
	\centering
	\includegraphics[width=.45\textwidth]{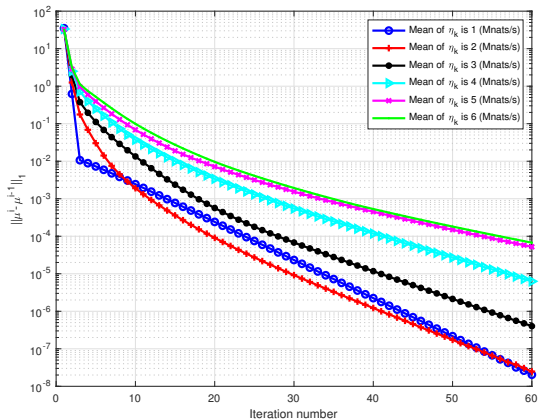}
	\caption{The convergence of Algorithm \ref{al:parallel}.}\label{fig:conv}
\end{figure}
\begin{table}
	\centering
	\caption{\textsc{CPU time for $60$ iterations of Algorithm 3.}}
	\begin{tabular}{lllllll}
		\hline \hline 
		Mean of $\eta_k$ & $1$ Mnats/s & $2$ Mnats/s & $3$ Mnats/s & $4$ Mnats/s  &      \\
		CPU time & $0.051$ s & $0.082$ s & $0.104$ s & $0.121$ s \\
		Mean of $\eta_k$ & & $5$ Mnats/s & $6$ Mnats/s &      \\
		CPU time  & & $0.144$ s & $0.158$ s   \\
		\hline
	\end{tabular}\label{eq:cpu}
	\vspace{1ex}
\end{table}

\begin{figure}
	\centering
	\begin{subfigure}[b]{0.48\textwidth}
		\includegraphics[width=\textwidth]{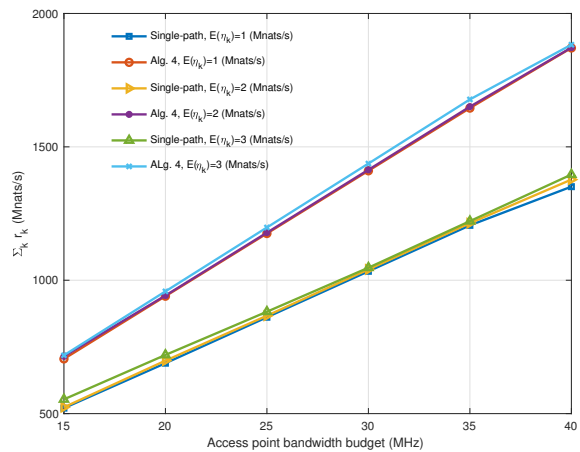}
		\caption{}
		\label{fig:se_opt}
	\end{subfigure}
	\begin{subfigure}[b]{0.48\textwidth}
		\includegraphics[width=\textwidth]{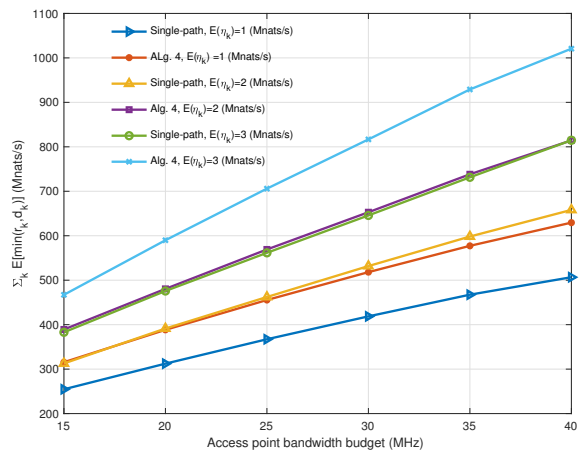}
		\caption{}
		\label{fig:se_obj}
	\end{subfigure}
	\begin{subfigure}[b]{0.48\textwidth}
		\includegraphics[width=\textwidth]{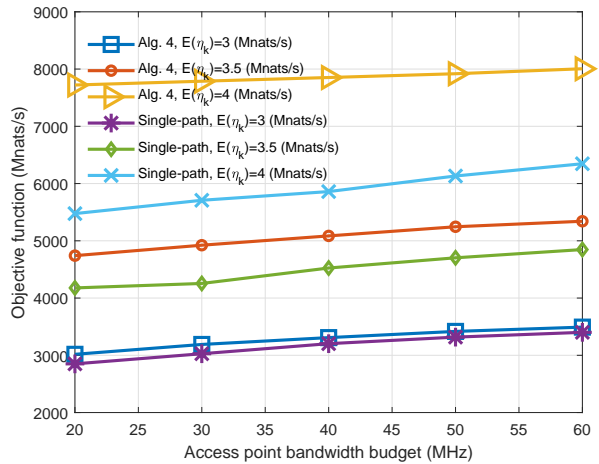}
		\caption{}
		\label{fig:obj}
	\end{subfigure}
	\caption{(a) Reserved rates by Algorithm \ref{al:bcd} (multi-path) and the single-path approach when wireless channels are deterministic.  (b) The expected supportable rates for users by Algorithm \ref{al:bcd} and the single-path approach when wireless channels are deterministic. (c) The objective function of problem \eqref{opt:first} with the single-path approach and Algorithm \ref{al:bcd} when wireless channels are stochastic.}
\end{figure}

\begin{figure*}
	\centering
	\begin{subfigure}{0.48\textwidth}
		\includegraphics[width=\textwidth]{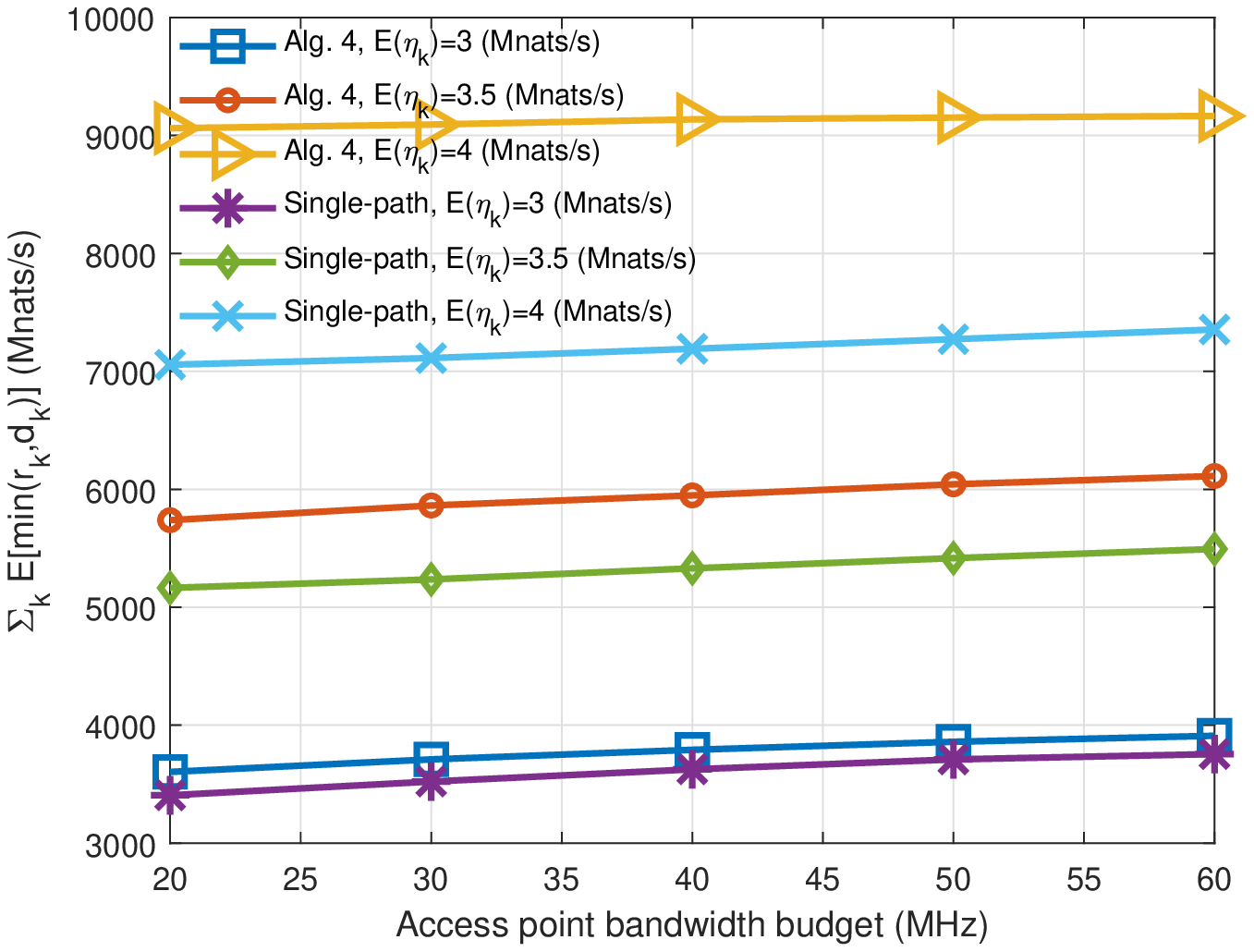}
		\caption{}
		\label{fig:demand}
	\end{subfigure}
	\begin{subfigure}{0.48\textwidth}
		\includegraphics[width=\textwidth]{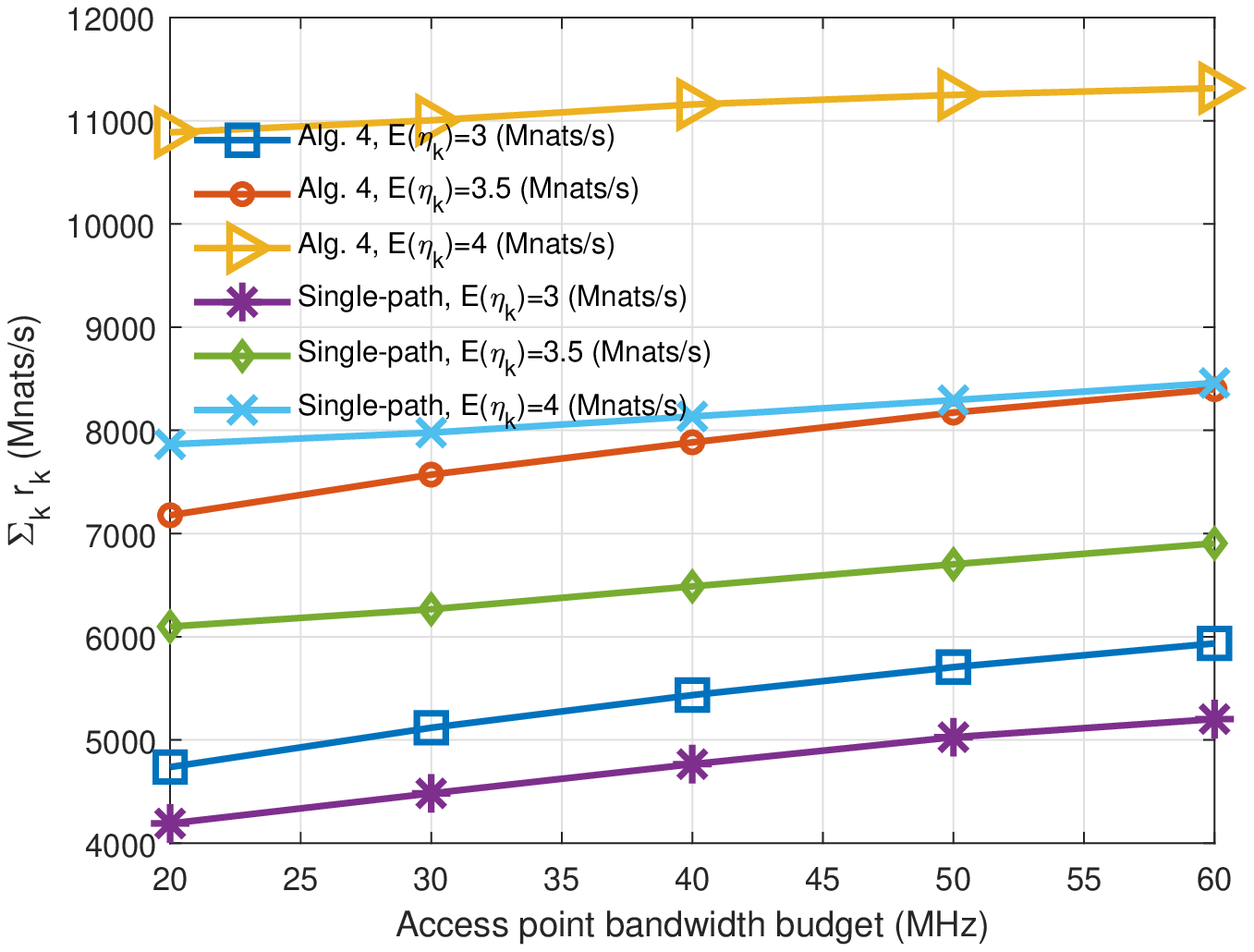}
		\caption{}
		\label{fig:r}
	\end{subfigure}
	\caption{Stochastic wireless channels: performance of the single-path approach and Algorithm \ref{al:bcd} in terms of (a) the aggregate expected supportable traffic; and (b) aggregate reserved rates.}
\end{figure*}
\begin{figure*}
	\centering
	\begin{subfigure}[b]{0.48\textwidth}
		\includegraphics[width=\textwidth]{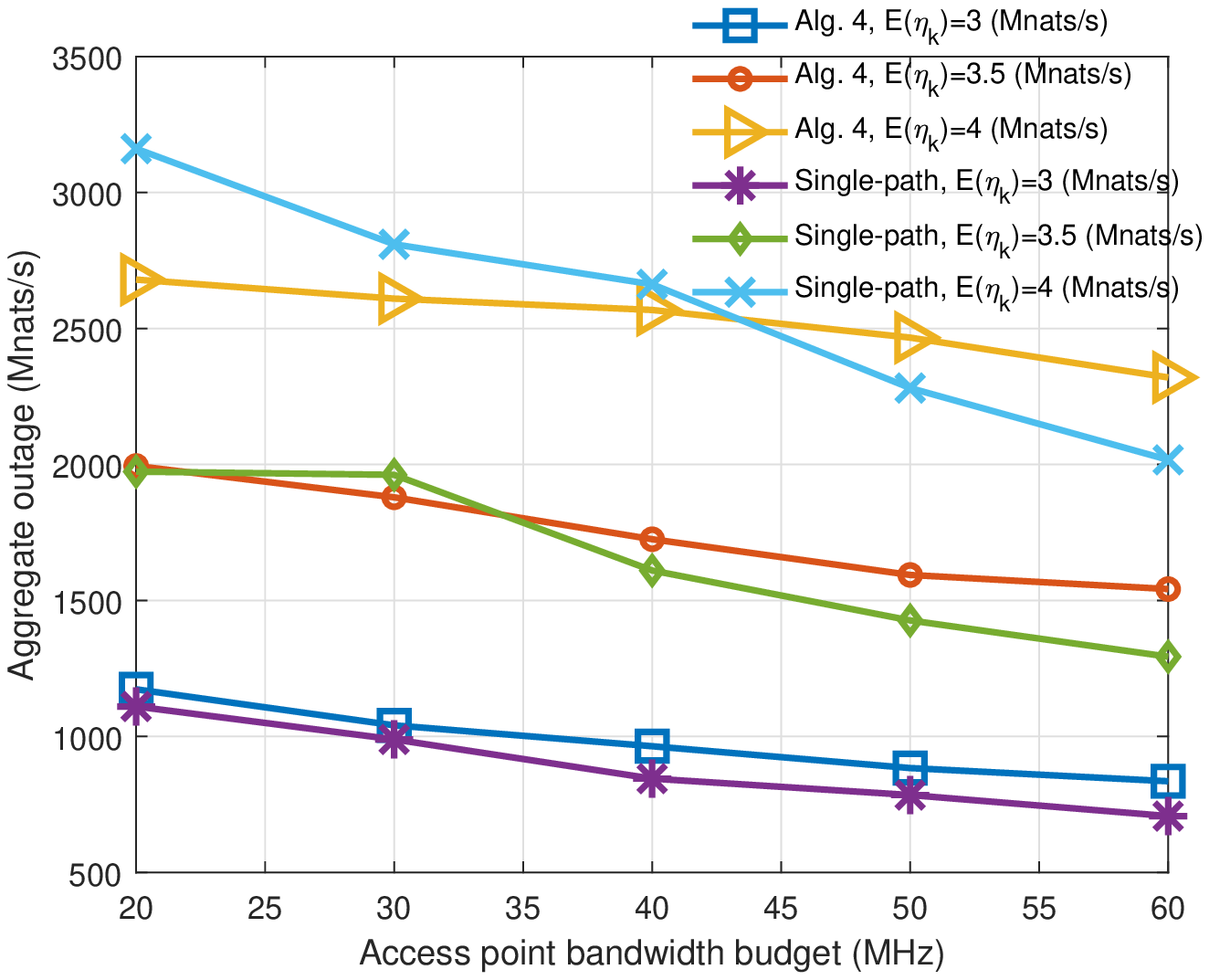}
		\caption{}
		\label{fig:outage}
	\end{subfigure}
	\begin{subfigure}[b]{0.48\textwidth}
		\includegraphics[width=\textwidth]{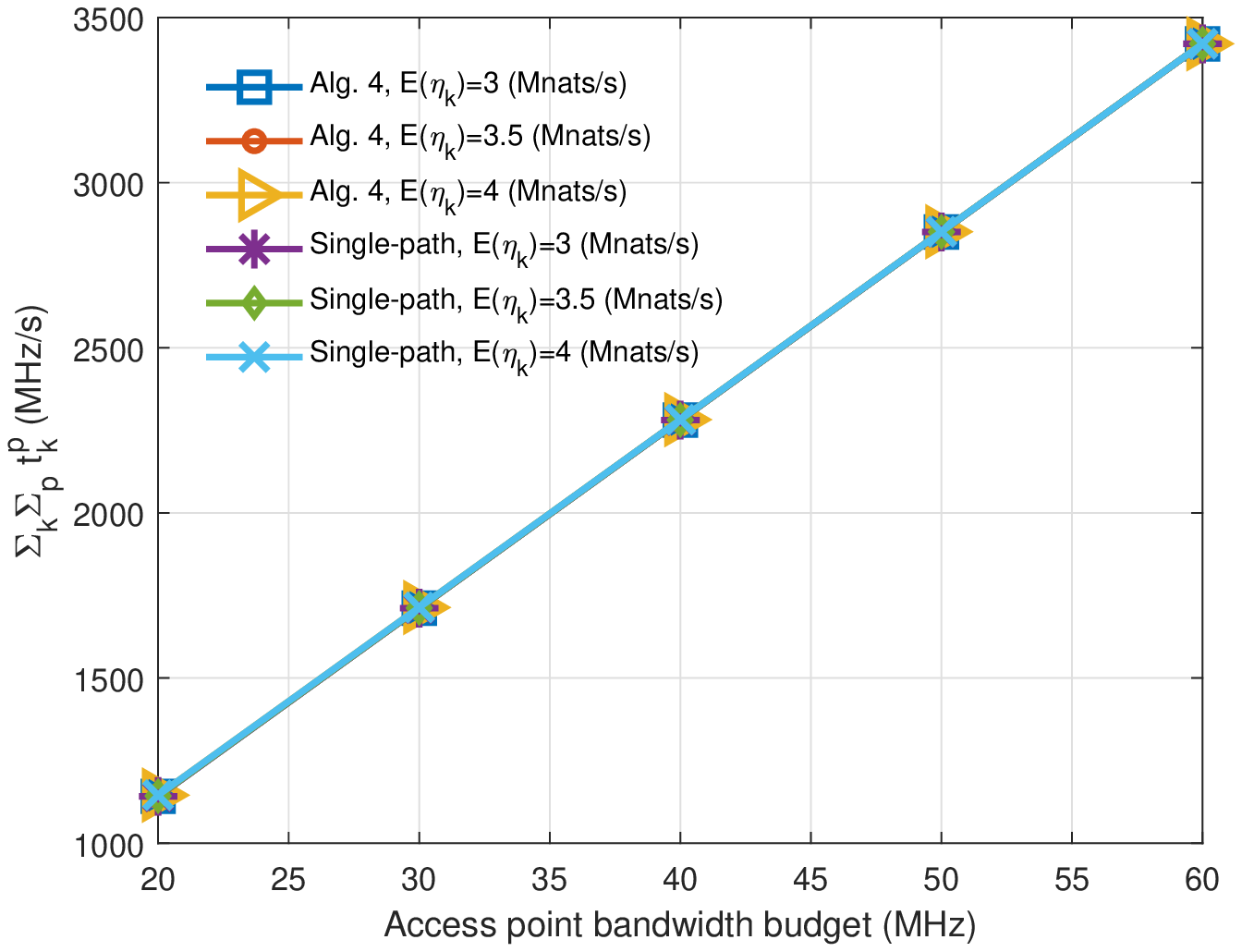}
		\caption{}
		\label{fig:t}
	\end{subfigure}
	\caption{ Stochastic wireless channels: performance of the single-path approach and Algorithm \ref{al:bcd} in terms of (a) expected outage of downlinks; and (b) reserved bandwidth.}
\end{figure*}
First, let us assume that transmission rates on downlinks are deterministic functions of bandwidth in APs. Therefore, no outage (rate loss) is considered. For each downlink, the transmission rate and the allocated bandwidth are connected to each other as $r_k^p=\delta_k^p t_k^p$, where $\delta_k^p$ is the spectral efficiency of the downlink of path $p$ to serve user $k$. Furthermore, suppose that  $\sigma_k=3.6$ and the capacity of each backhaul link listed previously is divided by $4$.  When the bandwidth budget of each AP increases from $15$ MHz to $40$ MHz, the aggregate reserved rates for users by Algorithm \ref{al:bcd} (multi-path) and the single-path approach are shown in Fig. \ref{fig:se_opt}. The aggregate expected supportable rates of users with both approaches are depicted in Fig. \ref{fig:se_obj}. It is observed that Algorithm \ref{al:bcd} outperforms the single-path approach. Both approaches utilize all available bandwidth in APs.

Consider the distribution of each wireless channel (downlink) achievable rate follows \eqref{eq:dist} and backhaul link capacities are as listed previously. Suppose that the available bandwidth in each AP increases by a step size of $10$ MHz, where $\theta_k=1/2$ and $\sigma_k=0.6$.  The objective function of the problem in \eqref{opt:first} by Algorithm \ref{al:bcd} and the single-path approach are compared in Fig. \ref{fig:obj}. Our proposed Algorithm \ref{al:bcd} outperforms the single-path approach. It is observed that with the increase of mean for $\eta_k$ and the AP bandwidth budget, the objective function increases. 

The expected supportable demands of users, depicted in Fig. \ref{fig:demand}, increases when the mean of $\eta_k$ and the AP bandwidth budget increase. 
It is observed from Fig. \ref{fig:demand} that the aggregate expected supportable traffic for users obtained by Algorithm \ref{al:bcd} is greater than that by the single-path approach. In Fig. \ref{fig:r}, we observe that the aggregate reserved rates for users increases with the increase of mean for $\eta_k$. Furthermore, it increases when the bandwidth budgets of APs increase. From Fig. \ref{fig:outage}, we observe that the aggregate expected outage increases as the mean of $\eta_k$ increases and decreases when the AP bandwidth budget increases. We observe from Fig. \ref{fig:t} that the bandwidth reservation by Algorithm \ref{al:bcd} is almost equal to that by the single-path approach. Numerical results show that $5$ iterations are sufficient for the convergence of Algorithm \ref{al:bcd}.

Next, we evaluate the performance of Algorithm \ref{al:bcd} against the average-based approach when both the demand and downlink achievable rates are stochastic. The average-based algorithm is oblivious to the user demand and the downlink achievable rate distributions. It only considers the average of each user demand and the average achievable rate of a downlink. The average-based approach uses the same set of paths used by Algorithm \ref{al:bcd}. The bandwidth budget in each AP is 40 MHz. Furthermore, $\sigma_k=0.6$ and $\theta_k=1/3$. The demand and downlink achievable rate distributions are as given in \eqref{eq:demand} and \eqref{eq:dist}, respectively. Both approaches are set to make reservations for users assuming the mean of $\eta_k$ is $2$ Mnast/s. We generate $100$ scenarios in which user demands and downlink capacities are random. For each scenario, we measure how much the user demands are satisfied using the reserved resources in the network by both approaches. After collecting results for $100$ scenarios, we plot the empirical CDF for the supply demand ratio in Fig. \ref{fig:fading2}. It is observed that when the mean of demand exceeds what it was supposed to be, the resource reservation made by Algorithm \ref{al:bcd} is more robust and supports random demands better. The total reserved link capacities in the backhaul by Algorithm \ref{al:bcd} is $1.0979\times 10^4$ Mnats/s and is $7.74\times 10^3$ Mnats/s by the average-based approach. Furthermore, the total reserved bandwidth in RAN by Algorithm \ref{al:bcd} is $2.043\times 10^3$ MHz and is $1.968\times 10^3$ MHz by the average-based approach.
\begin{figure}
	\centering
		\includegraphics[width=0.5\textwidth]{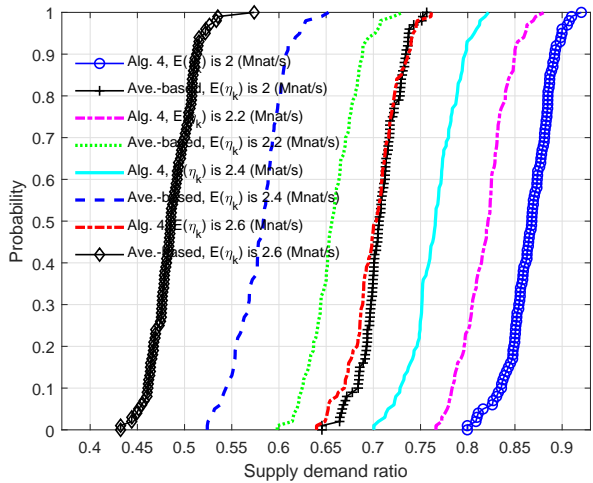}
	\caption{ The probability of being able to support the user demands up to a certain percentage.  }\label{fig:fading2}
\end{figure}
\\
\section{Concluding Remarks and Future Directions}
In this paper, we studied link capacity and transmission resource reservation in wireless data networks prior to the observation of user demands. Using the statistics of user demands and achievable rates of downlinks, we formulated an optimization problem to maximize the sum of user expected supportable traffic while minimizing the expected outage of downlinks. We demonstrated that this problem is non-convex in general. To solve the problem approximately, an efficient BCD approach is proposed which benefits from distributed and parallel computation when each block of variables is chosen to be updated. We demonstrated that despite the non-convexity of the problem, our proposed approach converges to a KKT solution to the problem. We verified the efficiency and the efficacy of our proposed approach against two heuristic algorithms developed for joint resource reservation in the backhaul and RAN.

In future work, we consider multi-tenant networks and reservation-based network slicing. In addition to users, tenants have different requirements \cite{Reyhanian4}, and maximum isolation between sliced resources should be enforced \cite{Reyhanian2}. The demand distribution of users may change over time and the network resources should be sliced for tenants accordingly. However, the slice reconfiguration for each tenant involves cost and overhead. Based on the cost of reconfiguration and newly arrived statistics, we formulate the problem from a sparse optimization perspective and propose an efficient approach based	on iteratively solving a sequence of group Least Absolute Shrinkage and Selection Operator (LASSO) problems \cite{Reyhanian4}.

\ifCLASSOPTIONcaptionsoff
\newpage
\fi

\bibliographystyle{IEEEbib}
\bibliography{ref_SDRA}

\appendix[Block Successive Upper-Bound Minimization]
Notations in this Appendix are identical to \cite{razaviyayn2013unified} and are not related to those defined in the paper. According to the BSUM algorithm \cite[Theorem 2]{razaviyayn2013unified}, when an upper-bound satisfies four conditions, the solution acquired by the BSUM converges to a local minima to the problem. Here, we give a brief description of the BSUM approach. 
Suppose that $u_i(\mathbf{x},\mathbf{x}^{t-1})$ is an upper-bound for an arbitrary objective function $f(\mathbf{x})$ at the point $\mathbf{x}^{t-1}$. In iteration $t$, one selected block (say, block i) is optimized by solving the following subproblem:
\allowdisplaybreaks
\begin{equation}
\begin{aligned}
& \underset{\mathbf{x}_i}{\min}
& & u_i(\mathbf{x}_i,\mathbf{x}^{t-1})\label{opt:bsum}\\
& \text{s.t.}
& & \mathbf{x}_i \in \mathcal{X}_i,
\end{aligned}
\end{equation} 
where $\mathcal{X}_i$ is the feasible set of block $\mathbf{x}_i$.
Conditions on the upper-bound are listed in \cite[Assumption 2]{razaviyayn2013unified} as follows:
\begin{enumerate}
	\item $u_i(\mathbf{y}_i,\mathbf{y})=f(\mathbf{y}),~ \forall \mathbf{y}\in \mathcal{X}, \forall i,$
	\item $u_i(\mathbf{x}_i,\mathbf{y})\geq f(\mathbf{y}_1,\dots,\mathbf{y}_{i-1},\mathbf{x}_i,\mathbf{y}_{i+1},\dots,\mathbf{y}_n),~ \forall \mathbf{x}_i\in \mathcal{X}_i, \forall \mathbf{y}\in \mathcal{X}, \forall i,$
	\item $u_i^{'}(\mathbf{x}_i,\mathbf{y};\mathbf{d}_i)|_{\mathbf{x}_i=\mathbf{y}_i}= f'(\mathbf{y};\mathbf{d}),\: \forall \mathbf{d}=(0,\dots,\mathbf{d}_i,\dots,0)\:\: \text{s.t.}\:\: \mathbf{y}_i+\mathbf{d}_i\in \mathcal{X}_i, \forall i,$
	\item $u_i(\mathbf{x}_i,\mathbf{y})$ is continuous in $(\mathbf{x}_i,\mathbf{y}), \forall i$.
\end{enumerate}
When problem \eqref{opt:bsum} is solved sequentially for different $i$ and there exists a unique solution for each subproblem, $\mathbf{x}^t$ converges to a KKT point of $f(\mathbf{x})$.
\end{document}